\newcommand{\cA}{{\cal A}}
\newcommand{\cB}{{\cal B}}
\newcommand{\cC}{{\cal C}}
\newcommand{\cL}{{\cal L}}
\newcommand{\cV}{{\cal V}}
\newcommand*{\qed}{\hfill\ensuremath{\square}}%
\newtheorem{theorem}{Theorem}[section]
\newtheorem{lemma}{Lemma}[section]
\newtheorem{corollary}{Corollary}[section]
\newtheorem{proposition}{Proposition}[section]
\newenvironment{proof}{\noindent{\bf Proof:}}{\qed}
\newcommand{\algo}[1]{
\medskip
\noindent \textbf{Algorithm {\tt #1}}\\
\nopagebreak}
\newcommand{\procend}{\hfill $\diamond$\medskip}
\begin{document}

\title{
Topology recognition with advice\thanks{This research was done during the visit of Andrzej Pelc at  Sapienza, University of Rome,  partially supported by a visiting fellowship from this university. A preliminary version of this paper appeared in the
Proceedings of the 27th International Symposium on Distributed Computing (DISC 2013).}
}

\author[1]{Emanuele G. Fusco}
\author[2]{Andrzej Pelc\footnote{Partially supported by NSERC discovery grant 8136--2013 
and by the Research Chair in Distributed Computing at the
Universit\'e du Qu\'ebec en Outaouais.}}
\author[3]{Rossella Petreschi}
\affil[1]{Department of Computer, Control, and Management Engineering  Antonio Ruberti,
Sapienza, University of Rome,
00185 Rome, Italy.
{\tt fusco@diag.uniroma1.it}.}
\affil[2]{
 D\'epartement d'informatique, Universit\'e du Qu\'ebec en Outaouais, Gatineau,
Qu\'ebec J8X 3X7, Canada. {\tt pelc@uqo.ca}. }
\affil[3]{Computer Science Department,
Sapienza, University of Rome,
00198 Rome, Italy.
{\tt petreschi@di.uniroma1.it}.}
\date{}
\maketitle

\begin{abstract}

In topology recognition,  each node of an anonymous network has to deterministically produce an isomorphic copy of
the underlying graph, with all ports correctly marked. This task is usually unfeasible without any a priori information.
Such information can be provided to nodes as advice. An oracle knowing the network can give a (possibly different)
string of bits to each node, and all nodes must reconstruct the network using this advice, after a given number of rounds
of communication. During each round each node can exchange arbitrary messages with all its neighbors and perform
arbitrary local computations. The time of completing topology recognition is the number of rounds it takes, and the size
of advice is the maximum length of a string given to nodes.

We investigate tradeoffs between the time in which topology recognition is accomplished and the minimum
size of advice that has to be given  to nodes. We provide upper and lower bounds on the minimum size of advice
that is sufficient to perform topology recognition in a given time, in the class of all graphs of size $n$ and diameter
$D\le \alpha n$, for any constant $\alpha< 1$. In most cases, our bounds are asymptotically tight.\end{abstract}

\begin{center}
{\bf Keywords}\\
topology recognition, network, advice, time, tradeoff
\end{center}

\section{Introduction}

\subsection{The model and the problem}

Learning an unknown network by its nodes is one of the fundamental distributed tasks in networks.
Once nodes acquire a faithful labeled map of the network, any other 
distributed task, such as leader election \cite{HS,P}, minimum weight spanning tree construction \cite{A}, 
renaming \cite{ABDKPR}, etc. can be performed by nodes using
only local computations. Knowing topology also simplifies the task of routing. 
More generally, constructing a labeled map converts all distributed problems to centralized ones, 
in the sense that nodes can solve them simulating a central monitor.

If nodes are a priori equipped with unique identifiers, they can deterministically 
construct a labeled map of the network, by exchanging messages, without any additional  information about the network. However,
even if nodes have unique identities, 
 relying on them 
for the task of learning the network is not always possible. Indeed, nodes may be reluctant to reveal their identities for security or privacy reasons. 
Hence it is
important to design algorithms reconstructing the topology of the network without assuming any node labels, i.e., for anonymous networks. In this paper we are interested in deterministic solutions.

 Ports at each node of degree $d$ are 
arbitrarily numbered $0, \dots,d-1$, and there is no assumed coherence between port numbers at different nodes. A node is aware of its degree, and it knows on which port it sends or receives a message. 
The goal is, for each node, to get an isomorphic copy of the graph underlying the
network, with all port numbers correctly  marked. (Port numbers are very useful, e.g., to perform the routing task.)
There are two variants of this task:
a weaker version, that we call
{\em anonymous topology recognition}, in which the nodes of the reconstructed map are unlabeled,
and a stronger version, that we call {\em labeled topology recognition}, in which all nodes construct 
a map of the network assigning distinct labels to all nodes in the same way, and know their position in this map.
Even anonymous topology recognition is not always feasible without any a priori information given to nodes, as witnessed, e.g.,  by the class
of oriented rings in which ports at each node are numbered 0,1 in clockwise order. No amount of information exchange can help nodes to recognize the size of the oriented ring and hence to reconstruct
correctly its topology. Thus, in order to accomplish (even anonymous) topology recognition for arbitrary networks, some information must be provided to nodes. This can be done in the form of {\em advice}. An oracle knowing the network gives a (possibly different) string of bits to each node. Then nodes execute
a deterministic distributed algorithm that does not assume knowledge of the network, but uses message exchange and the advice provided
by the oracle to nodes, in order to reconstruct the topology of the network by each of its nodes. 

In this paper we study tradeoffs between the {\em size of advice} provided to nodes and the {\em time} of topology recognition. The size of advice is defined as the length of the longest string of bits
given by the oracle to nodes. For communication,
we use the extensively studied ${\cal LOCAL}$ model  \cite{Pe}. 
In this model, communication proceeds in synchronous rounds
and all nodes start simultaneously. In each round each node
can exchange arbitrary messages with all its neighbors and perform arbitrary local computations. 
The time of completing a task is the number of rounds it takes. 
The central question of the paper is:

\begin{quotation}
What is the minimum size of advice that enables (anonymous or labeled) topology recognition in a given time $T$,
in the class of $n$-node networks of diameter $D$?
\end{quotation}

It should be stressed that the reason why we adopt the ${\cal LOCAL}$ model is to focus on how deeply nodes should probe the network in order to discover the topology. Since 
nodes are anonymous, this depth of probing may often go beyond the diameter of the network, enabling the nodes to ``see'' other nodes several times, along different paths, and deduce some information
from these deeper views. At the same time, as it is always done when using the ${\cal LOCAL}$ model  \cite{Pe}, the size of messages  is ignored.  

\begin{table}
\begin{small}
\begin{center}
   \begin{tabular}{ccc}
 \cline{1-3} 
 
 \multicolumn{2}{|c}{ \multirow{2}{*}{{\sc Time}}} &\multicolumn{1}{|c|}{\multirow{2}{*} {{\sc Optimal Size of Advice} }} \cr 
\multicolumn{2}{|c}{ } &\multicolumn{1}{|c|}{ }\cr  \cline{1-3} 
\multicolumn{2}{|c}{ \multirow{3}{*}{{$2D+1$}}} &\multicolumn{1}{|c|}{\multirow{2}{*} {{$1$} }} \cr 
\multicolumn{2}{|c}{ \multirow{1}{*}{{}}} &\multicolumn{1}{|c|}{\multirow{2}{*} {{[Proposition \ref{ubTime2D+1}]} }} \cr 
\multicolumn{2}{|c}{ } &\multicolumn{1}{|c|}{ }\cr  \cline{1-3} 
  \multicolumn{2}{|c}{ \multirow{2}{*}{{ $D+k$, }}} &\multicolumn{1}{|c|}{\multirow{2}{*} {{Upper B.: $O(1+(\log n)/ k)$, Lower B.: 1} }} \cr
 \multicolumn{1}{|c}{ \multirow{2}{*}{{ for $D/2 <  k\le D$ }}}& \multicolumn{1}{c}{\multirow{1}{*}{{$\star$}}} &\multicolumn{1}{|c|}{\multirow{2}{*} {{[Theorem \ref{ubTimeD+k}]} }} \cr 
\multicolumn{1}{|c}{ }& &\multicolumn{1}{|c|}{ }\cr  \cline{1-3} 
 \multicolumn{2}{|c}{ \multirow{2}{*}{{ $D+k$, }}} &\multicolumn{1}{|c|}{\multirow{2}{*} {{$\Theta(1+(\log n)/ k)$} }} \cr 
 \multicolumn{2}{|c}{ \multirow{2}{*}{{ for $0 <  k\le D/2$ }}} &\multicolumn{1}{|c|}{\multirow{2}{*} {{[Corollary \ref{corD+k}]} }} \cr 
\multicolumn{2}{|c}{ } &\multicolumn{1}{|c|}{ }\cr  \cline{1-3} 
\multicolumn{2}{|c}{ \multirow{3}{*}{{$D$}}} &\multicolumn{1}{|c|}{\multirow{2}{*} {{$\Theta(n\log n)$} }} \cr 
\multicolumn{2}{|c}{ \multirow{1}{*}{{}}} &\multicolumn{1}{|c|}{\multirow{2}{*} {{[Corollary \ref{corD}]} }} \cr 
\multicolumn{2}{|c}{ } &\multicolumn{1}{|c|}{ }\cr  \cline{1-3} 
 \multicolumn{2}{|c}{ \multirow{2}{*}{{$D-k$,}}} &\multicolumn{1}{|c|}{\multirow{2}{*} {{$\Theta((n^2\log n)/ (D-k+1))$} }} \cr 
 \multicolumn{2}{|c}{ \multirow{2}{*}{{for $0<k\le D$}}} &\multicolumn{1}{|c|}{\multirow{2}{*} {{[Corollary \ref{corD-k}]} }} \cr 
\multicolumn{2}{|c}{ } &\multicolumn{1}{|c|}{ }\cr  \cline{1-3} 
\end{tabular}
 
\caption{\label{summary} The summary of results. All our bounds are tight, except for those in the line with the $\star$ when $D\in o(\log n)$.}
\end{center}
\end{small}
\end{table}

\subsection{Our results}
We provide upper and lower bounds on the minimum size of advice sufficient to perform topology recognition in a given time,
in the class $\cC(n,D)$ of all graphs of size $n$ and diameter $D\le \alpha n$, for any constant $\alpha< 1$.
All our upper bounds are valid even for the harder task of labeled topology recognition, while our lower bounds also apply to the easier task of anonymous topology recognition. Hence we will only use the term {\em topology recognition} for all our results.
We prove upper bounds $f(n,D,T)$ on the minimum size of advice sufficient to perform topology recognition in a given time $T$, for the class $\cC(n,D)$, by providing an assignment of advice of size $f(n,D,T)$ and an algorithm, using this advice, that accomplishes this task, within time $T$, for any network in $\cC(n,D)$.
We prove lower bounds on the minimum size of advice, sufficient for a given time $T$, by constructing graphs in $\cC(n,D)$ for which topology recognition within this time is impossible with advice of a smaller size. (Notice that, while our upper bounds would hold even if $D \approx n$, our lower bound constructions rely on the constraint $D\le \alpha n$.)

The meaningful span of possible times for topology recognition is  between 0 and 
$2D+1$. Indeed, while  advice of size $O(n^2\log n)$ permits topology recognition in time 0 (i.e., without communication), we show that topology recognition in time $2D+1$  can be done with advice of size 1, which is optimal. 

For most values of the allotted time, our bounds are asymptotically tight.
This should be compared to many results from the literature on the advice paradigm (see, e.g.,~\cite{CFIKP,FGIP,FIP1,FKL,SN}), which often either consider the size of advice  needed for feasibility of a given task, or only give isolated points in the curve 
of tradeoffs between resources (such as time) and the size of advice.

We show that, if the allotted time is $D-k$, where $0<k\le D$,  then the optimal size of advice is  $\Theta((n^2 \log n)/(D-k+1))$.
If the allotted time is $D$, then this optimal size is $\Theta(n \log n)$.
If the allotted time is $D+k$, where $0<k\le D/2$,  then the optimal size of advice is $\Theta(1+(\log n) / k)$.
The only remaining gap between our bounds is for time $D+k$,  where $D/2<k\le D$.
In this time interval our upper bound remains  $O(1+(\log n) / k)$, while the lower bound (that holds for any time) is 1. This leaves a gap  if $D\in o(\log n)$.
See Table~\ref{summary} for a summary of our results.

Our results show how sensitive is the minimum
size of advice to the time allowed for topology recognition: allowing just one round more, from $D$
to $D+1$, decreases exponentially the advice needed to accomplish this task.
Our tight bounds on the minimum size of advice also show a somewhat surprising fact that the amount of information that nodes need to reconstruct a labeled map of the network, in a given time, and that needed to reconstruct an anonymous map of the network in this time, are asymptotically the same in most cases.

\subsection{Related work}
{Many papers \cite{AKM01,CFP,CFIKP,DP,EFKR,FGIP,FIP1,FIP2,FKL,FP,GPPR02,IKP,KKKP02,KKP05,SN,TZ05} considered the problem of increasing the efficiency of network tasks by providing nodes with some information of arbitrary kind.}
This approach was referred to as
algorithms using {\em informative labeling schemes}, or equivalently, algorithms with {\em advice}.  
Advice is given either to nodes of the network or to mobile agents performing some network task.
Several authors studied the minimum size of advice required to solve the
respective network problem in an efficient way. Thus the framework of advice permits to quantify the amount of information
needed for an efficient solution of a given network problem, regardless of the type of information that is provided.

In \cite{CFIKP} the authors investigated the minimum size of advice that has to be given to nodes
to permit graph exploration by a robot. 
 In \cite{KKP05}, given a distributed representation of a solution for a problem,
the authors investigated the number of bits of communication needed to verify the legality of the represented solution.
In \cite{FIP1} the authors compared the minimum size of advice required to
solve two information dissemination problems using a linear number of messages. In \cite{FIP2} the authors
established the size of advice needed to break competitive ratio 2 of an exploration algorithm in trees.
In \cite{FKL} it was shown that advice of constant size permits to carry on the distributed construction of a minimum
spanning tree in logarithmic time. 
In \cite{EFKR} the advice paradigm was used for online problems.
In \cite{FGIP} the authors established lower bounds on the size of advice 
needed to beat time $\Theta(\log^*n)$
for 3-coloring of a cycle and to achieve time $\Theta(\log^*n)$ for 3-coloring of unoriented trees.  
In the case of \cite{SN} the issue was not efficiency but feasibility: it
was shown that $\Theta(n\log n)$ is the minimum size of advice
required to perform monotone connected graph clearing.
In \cite{IKP} the authors studied radio networks for
which it is possible to perform centralized broadcasting in constant time. They proved that
$O(n)$ bits of advice allow to obtain constant time in such networks, while
$o(n)$ bits are not enough. 

Distributed computation on anonymous networks has been investigated by many authors, e.g.,
\cite{An,ASW,BV,EPSW,ESW,FP2, KKV,Pe,YK3}
for problems ranging from leader election to computing boolean functions
and communication in wireless networks. 
In \cite{EPSW} the authors compared randomized and deterministic algorithms to solve distributed problems in anonymous networks.
In \cite{ESW} a hierarchy of distributed problems in anonymous networks was studied.
Feasibility of topology recognition for anonymous graphs with adversarial port labelings was studied in~\cite{YK3}.
The problem of efficiency of map construction by a mobile agent, equipped with a token, exploring an anonymous graph  has
been studied in \cite{CDK}. In \cite{DP} the authors investigated the minimum size of advice
that has to be given to a mobile agent, in order to enable it to reconstruct  the topology of an anonymous network or to construct its spanning tree.
Notice that the mobile agent scenario makes the problem of map construction much different from our setting. Since all the advice is given to a single agent,
breaking symmetry may be very hard. Even anonymous map construction often requires providing a large amount of information to the agent, regardless of the exploration time. 
To the best of our knowledge, tradeoffs between time and the size of advice for topology recognition have never been studied before.

\section{Preliminaries}
Unless otherwise stated, we use the word {\em graph} to mean a simple undirected connected graph without node labels, and with ports at each node of degree $d$ labeled $\{0,\ldots,d-1\}$. 
Two graphs $G=(V,E)$ and $G'=(V',E')$ are {\em isomorphic}, if and only if, there exists a bijection $f: V \longrightarrow V'$ such that the edge
$\{u,v\}$, with port numbers $p$ at $u$ and $q$ at $v$ is in $E$, if and only if, the edge $\{f(u),f(v)\}$ with port numbers $p$ at $f(u)$ and $q$ at $f(v)$ is in $E'$.

The size of a graph is the number of its nodes. Throughout the paper we consider a fixed positive constant $\alpha < 1$ and the class of graphs of size $n$ and diameter $D\le \alpha n$. We use $\log$ to denote the logarithm to the base $2$.
For a graph $G$, a node $u$ in $G$, and any integer $t$, we denote by $N_t(u)$ the set of nodes in $G$ at distance at most $t$ from $u$.

We will use the following notion from \cite{YK3}. 
The {\em view} from node $u$ in graph $G$ is the infinite tree $\cV(u)$  rooted at $u$ with unlabeled nodes and labeled ports, whose branches are infinite sequences of port numbers coding all infinite paths in the graph, starting from node $u$. The {\em truncated view} $\cV^{l}(u)$ is the truncation of this tree to depth $l \geq 0$. to level $l$, for each $l$.
 
Given a graph $G=(V,E)$, a function $f : V \longrightarrow \{0,1\}^{*}$ is called a {\em decoration} of $G$.
Notice that an assignment of advice to nodes of $G$ is a decoration of $G$.
For a given decoration $f$ of a graph $G$ we define the {\em decorated graph} $G_f$ as follows. Nodes of $G_f$ are ordered pairs $(v,f(v))$, for all nodes $v$ in $V$.
$G_f$ has an edge $\{(u,f(u)),(v,f(v))\}$ with port numbers $p$ at $(u,f(u))$ and $q$ at $(v,f(v))$, if and only if,
$E$ contains the edge $\{u,v\}$, with port numbers $p$ at $u$ and $q$ at $v$.

We define  the {\em decorated view} at depth $l$ of node $v$ in $G$, according to $f$, as the truncated view at depth $l$ of node $(v, f(v))$ in the decorated graph $G_f$. Nodes in the decorated view are labeled with the values assigned to the corresponding nodes of $G$ by the decorating function $f$.

The following two lemmas will be used in the proofs of our upper bounds.

\begin{lemma}\label{lemDominating}
Let $G$ be a graph and let $r$ be a positive integer. There exists a set $X$ of nodes in  $G$ satisfying the following conditions. 
\begin{itemize}
\item For any node $w$ of $G$ there exists a node $u$ in $X$ such that the distance between $w$ and $u$ is at most $r$.
\item For each pair $\{u,v\}$ of distinct nodes in $X$, the distance between $u$ and $v$ is larger than $r$.
\end{itemize}
\end{lemma}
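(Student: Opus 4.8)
The plan is to construct the set $X$ greedily, which is the standard way to obtain a maximal $r$-separated set. I would proceed as follows. Start with $X = \emptyset$ and repeatedly pick any node $u$ of $G$ whose distance to every node currently in $X$ is strictly larger than $r$, and add $u$ to $X$; stop when no such node exists. Since $G$ is finite, this process terminates, and it produces a set $X$ that is, by construction, maximal with respect to the property of being pairwise at distance more than $r$.

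The second condition of the lemma is immediate from the construction: whenever a new node $u$ is added to $X$, it is at distance more than $r$ from all nodes already present, and subsequent additions only preserve this because the condition is symmetric and we never remove nodes. So for any two distinct $u, v \in X$, whichever was added later was at distance more than $r$ from the other at the time of insertion, hence they are at distance more than $r$.

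For the first condition, I would argue by contradiction (equivalently, by appealing to maximality). Suppose some node $w$ of $G$ is at distance more than $r$ from every node of $X$. Then at the moment the greedy procedure halted, $w$ itself would have been an eligible node to add, contradicting the fact that the procedure stopped only when no eligible node remained. Hence every node $w$ lies within distance $r$ of some node of $X$.

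There is essentially no real obstacle here: the only thing to be slightly careful about is the termination and well-definedness of the greedy process (finiteness of $G$ handles this) and the observation that eligibility of a node is never destroyed by adding other nodes far from it — but a node once ineligible (within distance $r$ of some chosen node) stays ineligible, and a node still eligible at the end witnesses a violation of maximality. So the ``hard part'' is merely stating the maximality argument cleanly; the proof is a short routine greedy construction.
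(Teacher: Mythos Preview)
Your proof is correct and is essentially the same as the paper's: the paper simply states that any independent dominating set of the $r$th power of $G$ works, and your greedy construction is precisely the standard way to build such a set (a maximal independent set in $G^r$, which is automatically dominating). You have just unpacked the one-line argument.
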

\begin{proof}
The set proving the lemma is any independent dominating set of the $r$th power of the graph $G$.
\end{proof}

\begin{lemma}\label{lemUniqueLabels}
Let $G$ be a graph of diameter $D$ and let $A$ be an injective  decoration of $G$. Then each node $u$ in $G$ can accomplish topology recognition using its view, decorated according to $A$, at depth $D+1$, even without knowing $D$ a priori.
\end{lemma}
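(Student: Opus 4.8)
The plan is to exploit the injectivity of $A$: since distinct nodes of $G$ receive distinct labels, every label occurring in a decorated view identifies a unique node of $G$, and any two tree-nodes carrying the same label are copies of the same node of $G$ and hence have identical subtrees. Topology recognition then reduces to two steps: (i) detecting, inside the truncated decorated view, a depth at which every node of $G$ already appears together with its full port structure, and (ii) ``folding'' that truncation along equal labels to recover the decorated graph $G_A$, and thereby $G$ with its ports.

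I would first set up the stopping rule, which is the only place the clause ``without knowing $D$'' is used. For $l\ge 1$, let $L_l$ be the set of labels occurring at depth at most $l$ in the view of $u$ decorated by $A$; this is read off from $\cV^{l}(u)$. A node $v$ occurs at depth at most $l$ exactly when $\mathrm{dist}(u,v)\le l$, so, using injectivity of $A$, $L_l=L_{l-1}$ holds if and only if $G$ has no node at distance exactly $l$ from $u$, which by connectivity is equivalent to the eccentricity $e(u)$ of $u$ being at most $l-1$, i.e.\ to $L_{l-1}$ being already the set of labels of \emph{all} nodes of $G$. Hence $l^{\star}:=\min\{\,l\ge 1 : L_l=L_{l-1}\,\}$ is well defined, equals $e(u)+1\le D+1$, and is computable from $\cV^{D+1}(u)$ (one only needs to inspect $L_1,\dots,L_{D+1}$). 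Moreover, at depth $l^{\star}$ every node $v$ of $G$ has a copy, at depth $\mathrm{dist}(u,v)\le l^{\star}-1$, all of whose children are present in the truncation.

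Next comes the folding. Working inside $\cV^{l^{\star}}(u)$, which is a truncation of the given $\cV^{D+1}(u)$, node $u$ builds a graph $G'$ with vertex set $L_{l^{\star}}$; for each label $\ell=A(v)$ it picks an arbitrary tree-node $\tau$ with label $\ell$ at depth at most $l^{\star}-1$ and reads off from the edges and labels just below $\tau$ the port structure of $v$ (its degree and, for each port $p$ of $v$, the port number at the other end of the $p$-th incident edge and the label of the node there), recording the corresponding port-labeled edges of $G'$. This is consistent because all tree-nodes with label $\ell$ have the subtree $\cV(v)$ below them, hence the same children. Then $v\mapsto A(v)$ is an isomorphism of $G$, with its ports, onto $G'$: every recorded edge is a genuine incident edge of a node of $G$, and conversely every edge $\{v,w\}$ with ports $p$ at $v$ and $q$ at $w$ is recorded, because the chosen copy of $v$ exhibits its $p$-child with label $A(w)$ and port $q$. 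Since $u$ also knows its own label $A(u)$, it locates itself in $G'$, so even labeled topology recognition is achieved.

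The only real obstacle is the first step: justifying that the algorithm can recognize it is done without being told $D$; everything else is routine bookkeeping about views. The key leverage is injectivity, which turns ``no new label appears at the next level'' into ``no new node exists at that distance'' — exactly a certificate that all of $G$ has been seen — while the extra $+1$ in the depth $D+1$ is precisely what is needed so that the farthest nodes also reveal all of their incident edges.
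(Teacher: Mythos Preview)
Your proof is correct and follows essentially the same approach as the paper: define the stopping depth as the first level at which no new label appears, argue via injectivity and connectivity that this occurs by depth $D+1$ and that the truncated decorated view then contains all nodes together with their full port structure, and reconstruct $G$ by identifying equal labels. The paper's proof is terser and omits the explicit folding argument, but the key idea---using stabilization of the label set as a certificate that the whole graph has been seen---is identical.
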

\begin{proof}
Let $l$ be the minimum integer such that the views at depth  $l$ and $l+1$, decorated according to $A$, contain the same set of values.
Then the view at depth $l+1$ contains all nodes and all edges of the graph $G$, and thus it is sufficient to reconstruct the graph $G$ and its decoration $A$.
The lemma follows from the fact that the view at depth $D$ contains all nodes of the graph.  
\end{proof}

The following proposition can be easily proved by induction on the round number.
Intuitively it says that, if two nodes executing the same algorithm have the same decorated views at depth $t$, then they behave identically for at least $t$ rounds.

\begin{proposition}\label{propSameHistory}
Let $G$ and $G'$ be two graphs, let $u$ be a node of $G$ and let $u'$ be a node of $G'$. Let $A$ be a decoration of $G$ and let $A'$ be a decoration of $G'$.
Let $\cA$ be any topology recognition algorithm.
Let $\sigma_t$ be the set of triples $\langle p,r,m\rangle$, where $m$ is the message received by node $u$ in round $r\le t$ through port $p$ when executing algorithm $\cA$ on the graph $G$, decorated according to $A$. Let $\sigma'_t$ be defined as $\sigma_t$, but for $u'$, $G'$, and $A'$ instead of $u$, $G$, and $A$.

If the view of $u$ at depth $t$, decorated according to $A$ is the same as the view of $u'$ at depth $t$, decorated according to $A'$, then $\sigma_t = \sigma'_t$.
\end{proposition}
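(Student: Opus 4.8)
The plan is to prove Proposition~\ref{propSameHistory} by induction on $t$, following the intuition that a node's behavior in the first $t$ rounds is a deterministic function of the information reachable within distance $t$ in the decorated graph. I would first fix the algorithm $\cA$ and note that, since $\cA$ is deterministic and nodes are anonymous, the state of a node after round $r$ and the messages it sends in round $r+1$ on each of its ports are determined entirely by: its degree, the decoration value at the node itself, and the set $\sigma_r$ of triples $\langle p, r', m\rangle$ of messages received so far. Thus it suffices to show, by induction on $t$, that if the decorated views of $u$ and $u'$ at depth $t$ agree, then $\sigma_t = \sigma'_t$, and moreover that corresponding neighbors of $u$ and $u'$ (reached along matching port sequences) are in ``the same state'' through round $t$ in the sense that their depth-$(t{-}1)$ decorated views agree.

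The base case $t = 0$ is immediate: $\sigma_0 = \sigma'_0 = \emptyset$ (no messages received before round $1$), and the depth-$0$ decorated view of a node is just its own decoration value together with its degree, so the hypothesis directly gives the matching of the nodes themselves. For the inductive step, assume the claim for $t-1$. The decorated view of $u$ at depth $t$ determines, for each port $p$ of $u$, the decorated view at depth $t-1$ of the neighbor $w_p$ reached through port $p$ (as the corresponding subtree, with the incoming port label recorded). Since the depth-$t$ decorated views of $u$ and $u'$ agree, $u$ and $u'$ have the same degree, and for each port $p$ the neighbors $w_p$ and $w'_p$ have the same depth-$(t{-}1)$ decorated view. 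By the inductive hypothesis applied to each such pair, $w_p$ and $w'_p$ receive the same multiset of messages through round $t-1$, hence are in the same state at the end of round $t-1$, and therefore send the same message to $u$ (resp.\ $u'$) in round $t$ through the port connecting them. The port number on $u$'s side of this edge is $p$ in both graphs, by the definition of the decorated view and of isomorphism of decorated graphs. Combining the round-$\le (t{-}1)$ messages received by $u$ (equal to those received by $u'$ by a similar but shallower application of the hypothesis, using that the depth-$t$ view contains the depth-$(t{-}1)$ view) with the round-$t$ messages just identified, we get $\sigma_t = \sigma'_t$.

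The one point requiring a little care — and the main obstacle, though it is conceptual rather than technical — is bookkeeping the correspondence between ports and subtrees of the view: one must be careful that the depth-$t$ decorated view of $u$ genuinely determines the depth-$(t{-}1)$ decorated view of \emph{each} neighbor \emph{together with the port labels on both endpoints of the connecting edge}, so that the messages sent ``back toward $u$'' in round $t$ are received on the correct port. This is exactly what the definition of $\cV^l(u)$ and of the decorated graph $G_f$ encodes: a branch of the view records the sequence of ports traversed, including the port by which one arrives at each node. Since a node's round-$t$ message on a given port is a function of its state and that port number, and the state is determined by its depth-$(t{-}1)$ decorated view via the inductive hypothesis, everything matches up. I would remark that the proposition is stated for two possibly different graphs and decorations precisely so that it can later be applied to a graph and a modified copy of it; the proof makes no use of $G = G'$, only of the agreement of the decorated views.
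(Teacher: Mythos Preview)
Your proposal is correct and follows exactly the approach the paper indicates: the paper does not give a detailed proof but simply remarks that the proposition ``can be easily proved by induction on the round number,'' which is precisely the induction on $t$ you carry out. Your handling of the one subtle point---that the depth-$t$ decorated view of $u$ determines, for each port $p$, the full depth-$(t{-}1)$ decorated view of the neighbor $w_p$ together with the port labels on both sides of the connecting edge---is the right justification for why the inductive hypothesis applies to the neighbors.
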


We will use the above proposition to prove our lower bounds as follows.
If the size of advice is too small, then there are two non-isomorphic graphs $G$ and $G'$ resulting, for some node $u$ in $G$ and some node $u'$ in $G'$, in the same decorated view at the depth equal to the time available to perform topology recognition.
Hence either $u$ or $u'$ must incorrectly reconstruct the topology (even anonymous) of $G$ or $G'$.

\section{Time  $2D+1$}

We start our analysis by constructing a topology recognition algorithm that works in time $2D+1$ and uses advice of size~1.
Since we will show that, for arbitrary $D\ge 3$, there are networks in which topology recognition without advice is impossible in any time,
this shows that the meaningful time-span to consider for topology recognition is between $0$ and $2D+1$.

\algo{TR-1}
\noindent{\bf Advice:}\\
The oracle assigns bit 1 to one node (call it $v$), and bit 0 to all others. Let $A$ be this assignment of advice.

\noindent{\bf Node protocol:}\\
In round $i$, each node $u$ sends its view at depth $i-1$, decorated according to $A$, to all its neighbors;
it receives such views from all its neighbors and constructs its view at depth $i$, decorated according to $A$.
This task continues until termination of the algorithm.

Let $t$ be the smallest round number at which node $u$ sees a node with advice $1$ in its view decorated according to $A$  (at depth $t$).
Node $u$ assigns to itself a label in round $t$ as follows.
The label $\ell(u)$ is the lexicographically smallest shortest path, defined as a sequence of consecutive port numbers  (each traversed edge corresponds to a pair of port numbers), from $u$ to any node with advice 1, in its decorated view at depth $t$.
(Notice that since there can be many shortest paths between $u$ and $v$, this node can appear many times in the decorated view at depth $t$ of $u$.)
Let $A^*$ be the decoration corresponding to the labeling  obtained as above.
(We will show that labels in $A^*$ are unique.)

After round $t$, node $u$ starts constructing its decorated view, according to decoration $A^*$. 
In any round $t' >t$, node $u$ sends both its view, decorated according to $A$, at depth $t'$, and its view, decorated according to $A^*$, at the largest possible depth. Messages required to perform this task are piggybacked to those used for constructing views, decorated according to $A$, at increasing depths.
In each round $t'$, node $u$ checks for newly discovered values of $A^*$.
As soon as there are no new values, node $u$ reconstructs the labeled map and outputs it.
Then node $u$ computes the diameter $D$ of the resulting graph and continues to send its views, decorated according to $A$ and according to  $A^*$, at increasing depths, until round $2D+1$.  After round $2D+1$ node $u$ terminates.
\procend

\begin{proposition}\label{ubTime2D+1}
Algorithm {\tt TR-1} completes topology recognition for all graphs of size $n$ and diameter $D$ in time $2D+1$, using advice of size 1.
\end{proposition}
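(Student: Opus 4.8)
The plan is to verify three things about Algorithm {\tt TR-1}: (i) correctness of the labeling, i.e.\ that the decoration $A^*$ built in round $t$ assigns pairwise distinct labels and that each node eventually learns this labeling; (ii) correctness of the reconstructed map; and (iii) the time bound of $2D+1$ rounds. Throughout I would track the index round by round using the observation that after round $i$ every node knows its decorated view (according to $A$) at depth $i$, which by the standard argument (cf.\ Proposition~\ref{propSameHistory}) is exactly the information it can have acquired.

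First I would analyze the labeling. Fix the unique node $v$ with advice $1$. For a node $u$ at distance $d(u,v)$, the smallest round at which $u$ sees a node with advice $1$ in its decorated view at that depth is exactly $t = d(u,v)$, since the decorated view at depth $t$ contains all nodes at distance $\le t$ and $v$ is the unique such node with advice $1$. So $u$ computes its label $\ell(u)$ as the lexicographically smallest shortest path (as a port sequence) from $u$ to $v$. The key claim is that $\ell$ is injective: if $\ell(u)=\ell(u')$ were a common port sequence, then following this sequence from $v$ backwards (reversing port roles) leads to a unique node, so $u=u'$. More carefully, a port sequence $\ell$ of length $d$ read from $v$ determines at each step which port to leave by, hence determines a unique walk from $v$; if $\ell$ is a shortest $u$--$v$ path then reversing it is a walk from $v$ ending at $u$, and this endpoint is a function of $\ell$ alone. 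This gives injectivity of $A^*$. Then, since $A^*$ is an injective decoration, Lemma~\ref{lemUniqueLabels} applies: once a node has its decorated view according to $A^*$ at depth $D+1$, it can accomplish topology recognition. It remains to check that nodes actually acquire enough of the $A^*$-decorated view; this follows because the propagation of $A^*$-values is piggybacked on the $A$-view propagation, and every node $w$ has fixed its own label $\ell(w)$ by round $d(w,v) \le D$, so by round $d(w,v) + (D+1)$ all nodes within distance $D+1$ of $w$ know $\ell(w)$. The "no new values" stopping rule correctly detects that all labels have been seen, exactly as in the proof of Lemma~\ref{lemUniqueLabels}.

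Next, the map output: when a node stops seeing new $A^*$-values, its $A^*$-decorated view contains every node of $G$ (all labels appear) and, by the minimality argument of Lemma~\ref{lemUniqueLabels}, also every edge, so the reconstructed labeled graph is isomorphic to $G$ with correct port numbers, and the node knows its own position as the root. From this map it computes the true diameter $D$ correctly.

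Finally, the timing. A node $u$ fixes $\ell(u)$ at round $t=d(u,v)$. Every node $w$ in $u$'s $A^*$-decorated view lies within distance $D$ of $v$, so $w$ fixed $\ell(w)$ by round $D$; hence by round $D + (D+1) = 2D+1$, node $u$ has, in particular, the $A^*$-decorated view at depth $D+1$ (indeed at depth at least $D+1$), so by Lemma~\ref{lemUniqueLabels} it has already output a correct map and computed $D$. Since all the "continue until round $2D+1$" padding is designed precisely so that every node halts at round $2D+1$, the algorithm terminates in time $2D+1$, and the advice used is a single bit, so the size of advice is $1$. I expect the main obstacle to be the bookkeeping in step (i)--(iii): proving injectivity of $\ell$ cleanly, and carefully arguing that the round at which each node has acquired a deep enough $A^*$-view is at most $2D+1$ uniformly, since the $A^*$-values are generated at different rounds at different nodes and then propagated; one must make sure the worst case $d(w,v) + (D+1) \le 2D+1$ bound is tight and that the piggybacking does not lose a round. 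Once that accounting is in place, everything else reduces to Lemma~\ref{lemUniqueLabels} and Proposition~\ref{propSameHistory}.
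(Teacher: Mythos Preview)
Your proposal is correct and follows essentially the same approach as the paper's proof: establish injectivity of the labeling $A^*$, observe that every node computes its label within $D$ rounds, argue that by round $2D+1$ every node has its $A^*$-decorated view at depth $D+1$, and conclude via Lemma~\ref{lemUniqueLabels}. Your argument for injectivity (reversing the port sequence from the distinguished node $v$ determines a unique walk, hence a unique starting node) is in fact more explicit than the paper's one-line justification, and your round-by-round accounting for the delayed propagation of $A^*$-values matches the paper's reasoning.
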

\begin{proof}
We first prove that the topology reconstructed at each node is correct.

Uniqueness of labels assigned according to  $A^*$  follows from the fact that  the node $v$ with advice $1$ is unique and, for each node $u$,
the lexicographically smallest shortest path between $u$ and $v$ is unique.
Since, in a graph of diameter $D$, every node is at distance at most $D$ from $v$, each node $u$ computes its label $\ell(u)$ within $D$ rounds.

Notice that some nodes will acquire their unique label earlier than others. However, since each node acquires its complete decorated view (according to $A^*$) at depth $i$ within $i$ rounds after all nodes at distance at most $i$ from it have acquired their unique label, after $2D+1$ rounds of communication all nodes have their decorated view (according to $A^*$) at depth $D+1$.
By making each node continue to participate in the construction of decorated views at increasing depths until round $2D+1$ (even after it reconstructed the topology), the algorithm guarantees that {\em all} nodes will be able to get their views, decorated according to $A^*$, at depth $D+1$. Correctness of the reconstructed topology follows from Lemma~\ref{lemUniqueLabels}. Since all nodes stop in round $2D+1$ and the advice provided to each node consists of 1 bit, the proof is complete.
\end{proof}

The following proposition, cf. \cite{YK3}, shows that advice of size 1, as used by Algorithm~{\tt TR-1}, is necessary, regardless of the allotted time.
As opposed to the $n$-node rings mentioned in the introduction as graphs that require at least one bit of advice, but whose diameter is $\lfloor n/2 \rfloor$, the class of graphs we will use to prove the proposition allows greater flexibility of the diameter.

\begin{proposition}\label{lb2D+1}
Let $D\ge 3$ and let $n\ge D+6$ be an even integer.
The size of advice needed to perform topology recognition for the class of all graphs of size $n$ and diameter $D$ is at least 1.
\end{proposition}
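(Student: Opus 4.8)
\medskip\noindent\textbf{Proof proposal.}

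The plan is to reduce the statement to the construction of two ``locally indistinguishable'' graphs. Advice of size $0$ means that the oracle hands the empty string to every node, so the induced decoration is constant and each node's decorated view equals its plain view $\cV(\cdot)$. Hence, by Proposition~\ref{propSameHistory}, it suffices to produce two \emph{non-isomorphic} graphs $G,G'$, both of size $n$ and diameter $D$, together with a node $u$ of $G$ and a node $u'$ of $G'$ such that $\cV(u)=\cV(u')$ as infinite rooted port-labeled trees: an algorithm run without advice then has the same message history, hence the same output, at $u$ and at $u'$ in every round, so it reconstructs the same labeled (and in particular the same unlabeled) map at both, and must err on one of $G,G'$ no matter how many rounds are allowed.

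I would obtain $G,G'$ as two non-isomorphic connected double covers of a common quotient graph $B$, based over a common vertex, so that both views equal the universal cover of $B$ over that vertex. Write $n=2m$ (this is where evenness enters) and build $B$ on $m$ vertices from three pieces glued at a single vertex $b$: (i) a constant-size \emph{ambiguity gadget} $\Gamma$ with first Betti number $2$ — e.g.\ $\Gamma=K_{2,3}$ with $b$ one of its degree-$3$ vertices — so that $\pi_1(\Gamma)$ is free of rank $2$; (ii) a \emph{backbone}, a pendant path on $b$ of length $L$ chosen so that the two lifts of the free end of this path lie at distance exactly $D$ in the double cover (for $\Gamma=K_{2,3}$ one checks that in every connected double cover the two lifts of $b$ are at distance $4$, so $L=(D-4)/2$); and (iii) a \emph{padding} tree of small radius attached at interior backbone vertices, carrying the remaining $m-|\Gamma|-L$ vertices and placed so that it never creates a pair of vertices farther apart than the two distinguished leaves. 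Then $G,G'$ are the two connected double covers of $B$ whose monodromy on $\Gamma$ comes from two \emph{distinct} index-$2$ subgroups of $\pi_1(\Gamma)$ (there are three such subgroups, all yielding connected covers); the backbone and the padding, being simply connected, lift trivially and identically in both. Since pointed covers of a fixed graph are classified by subgroups of its fundamental group, these two pointed covers of $\Gamma$ are non-isomorphic even as pointed port-labeled graphs; arranging the attachment so that the lifts of $b$ are the unique vertices of their kind in $G$ and in $G'$, every isomorphism $G\to G'$ would restrict to a pointed isomorphism of those covers — impossible — so $G\not\cong G'$. By construction $|V(G)|=|V(G')|=2m=n$, both diameters equal $D$ (the maximum distance is realized exactly by the two distinguished leaves, and the padding and gadget never beat it), and $\cV(u)=\cV(u')$ for $u,u'$ the corresponding backbone leaves. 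The padding can be emptied precisely when $m\ge|\Gamma|+L$, i.e.\ (up to a parity correction) when $n\ge D+6$, which is why that hypothesis appears.

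The main obstacle is exactly this accounting at the ends of the parameter range: one must choose $\Gamma$ and $L$ so that \emph{both} double covers have diameter exactly $D$ (not merely $\Theta(D)$) and exactly $n$ vertices, for \emph{all} $D\ge 3$ and even $n\ge D+6$ — in particular when $n$ is close to $D+6$ and $D$ is small, where the covers are forced to be nearly paths and there is little room to place $\Gamma$. Handling the parity of $D$ (for odd $D$ use, say, a ``bowtie'' gadget, in which the two lifts of the centre are at odd distance $3$) and the very smallest values of $D$ will likely require a separate explicit choice of $B$. The remaining verifications — connectedness and simplicity of the covers, the exact value of the diameter, and the uniqueness of $b$'s lifts used in the non-isomorphism argument — are routine once the pieces are fixed.
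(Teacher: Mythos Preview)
Your covering-space strategy is sound and would work, but it is a genuinely different route from the paper's. The paper simply imports two explicit six-vertex graphs $G_1,G_2$ from Yamashita--Kameda~\cite{YK3} that are non-isomorphic yet contain a pair of adjacent ``black'' nodes with identical views; it then handles even $D$ by subdividing the edge between the black nodes (obtaining eight-vertex gadgets $G'_1,G'_2$), and reaches arbitrary $n$ and $D$ by attaching to each black node, symmetrically, a path of the appropriate length and a clique of the appropriate size. No covering theory, no fundamental groups, no monodromy --- just one citation and a picture.

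What each approach buys: the paper's argument is short, fully explicit, and requires no verification beyond reading off the figure; yours explains \emph{why} such pairs exist (distinct index-$2$ subgroups of a rank-$2$ free group give non-isomorphic connected double covers sharing a universal cover) and would generalize more readily. The price you pay is exactly the accounting you flag: pinning the diameter to \emph{exactly} $D$ in both covers, handling the parity of $D$, and surviving the extremal case $n=D+6$ all require concrete choices you have not yet made, and your claim that the two lifts of $b$ in every connected double cover of $K_{2,3}$ sit at distance exactly $4$ needs checking (it is true, but not for free). The paper sidesteps all of this by using cliques as padding --- they absorb arbitrarily many vertices without affecting the diameter --- which is a trick worth borrowing if you pursue your construction.
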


\begin{proof}
Consider the two graphs $G_1$ and $G_2$ in Figure~\ref{fig.yk}.a that where constructed in~\cite{YK3}. They have the following properties:
\begin{itemize}
\item they are non-isomorphic;
\item both of them have size 6 and diameter 3;
\item all black nodes in both graphs have the same view;
\item the black nodes are adjacent in both graphs.
\end{itemize}

\begin{figure}
\centering

\begin{tabular}{cc}
\cline{1-2}
\multicolumn{1}{|c|}{
\scalebox{0.24}{\input{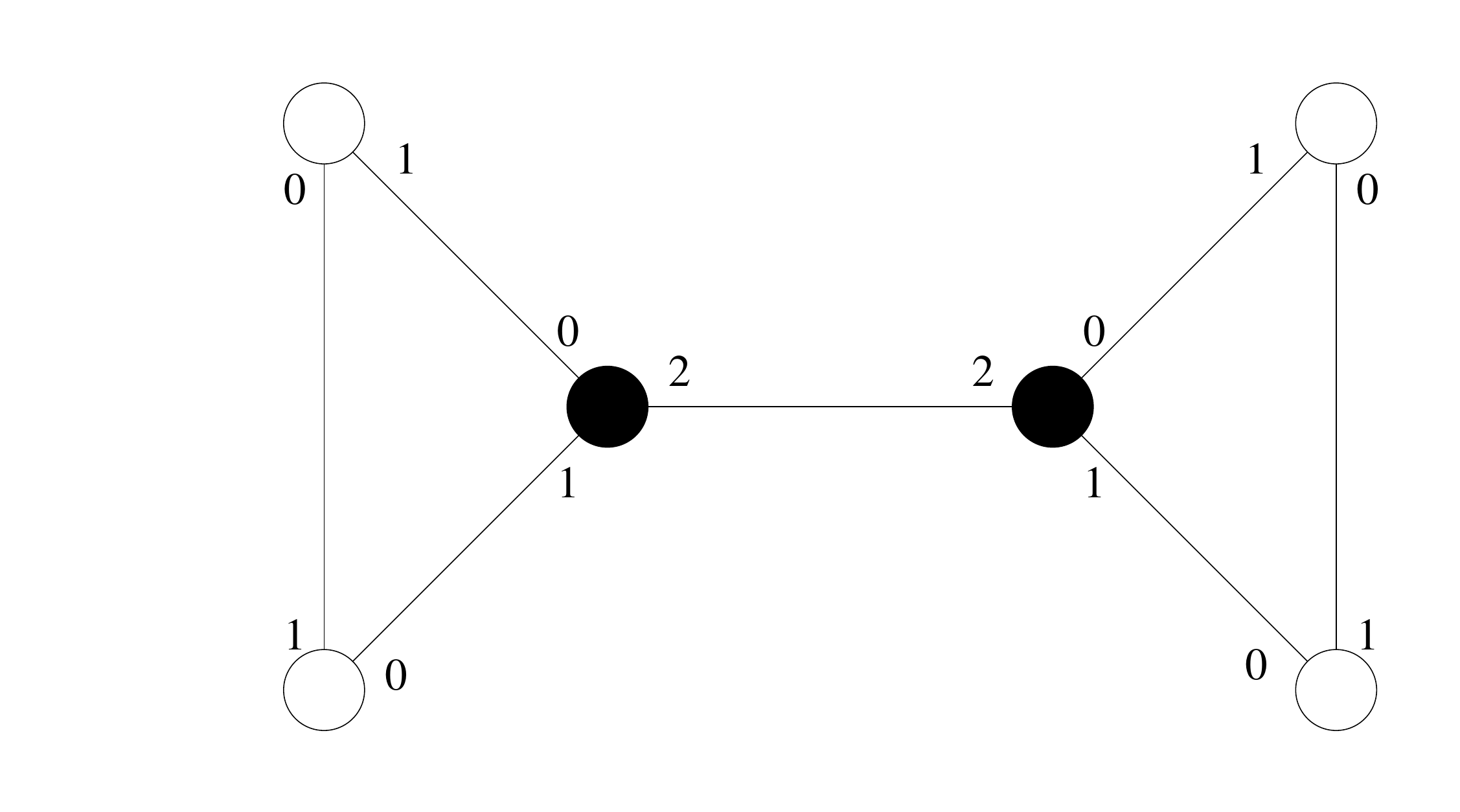_t}}
} & \multicolumn{1}{c|}{
\scalebox{0.24}{\input{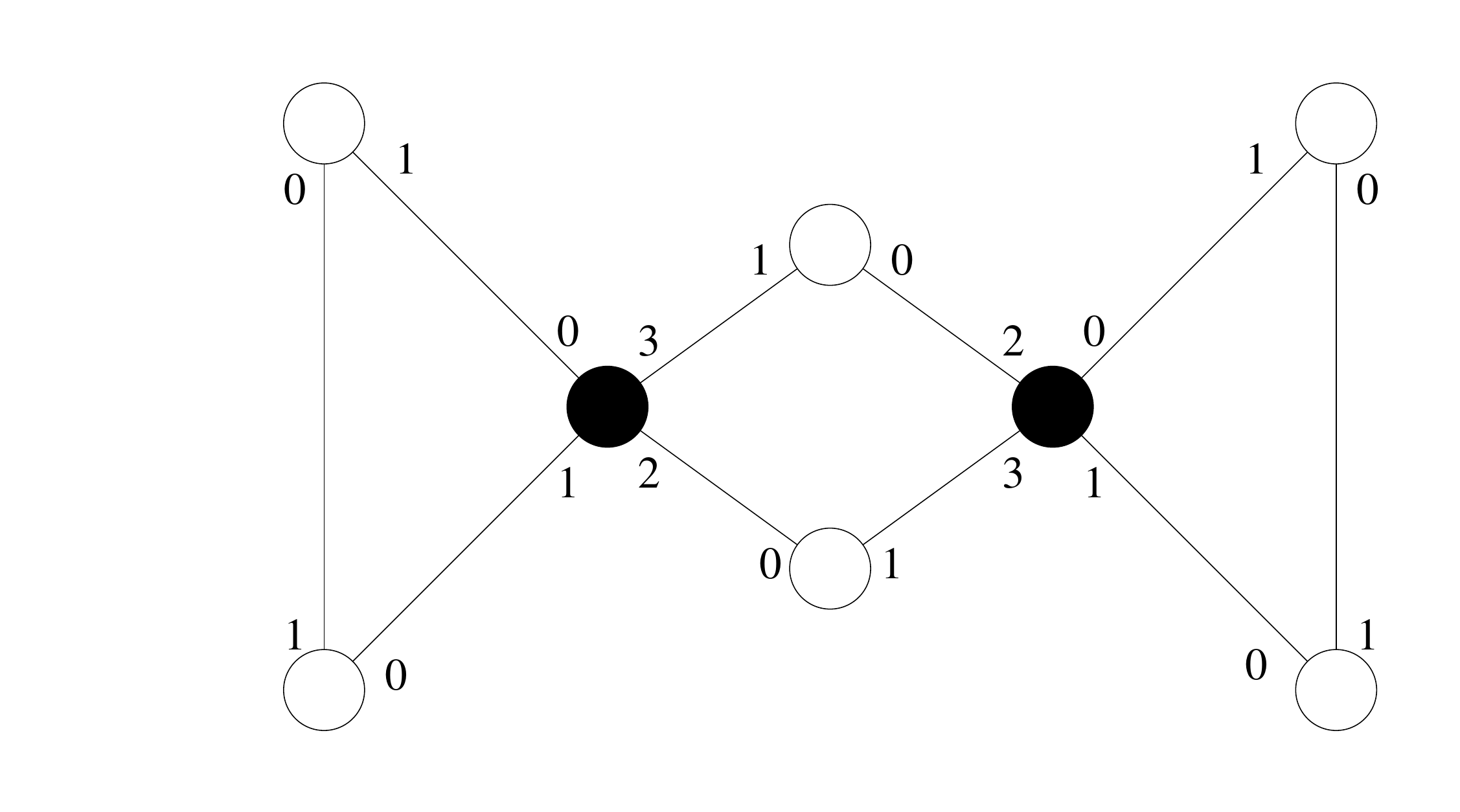_t}}
}\\

\multicolumn{1}{|c|}{
\scalebox{0.24}{\input{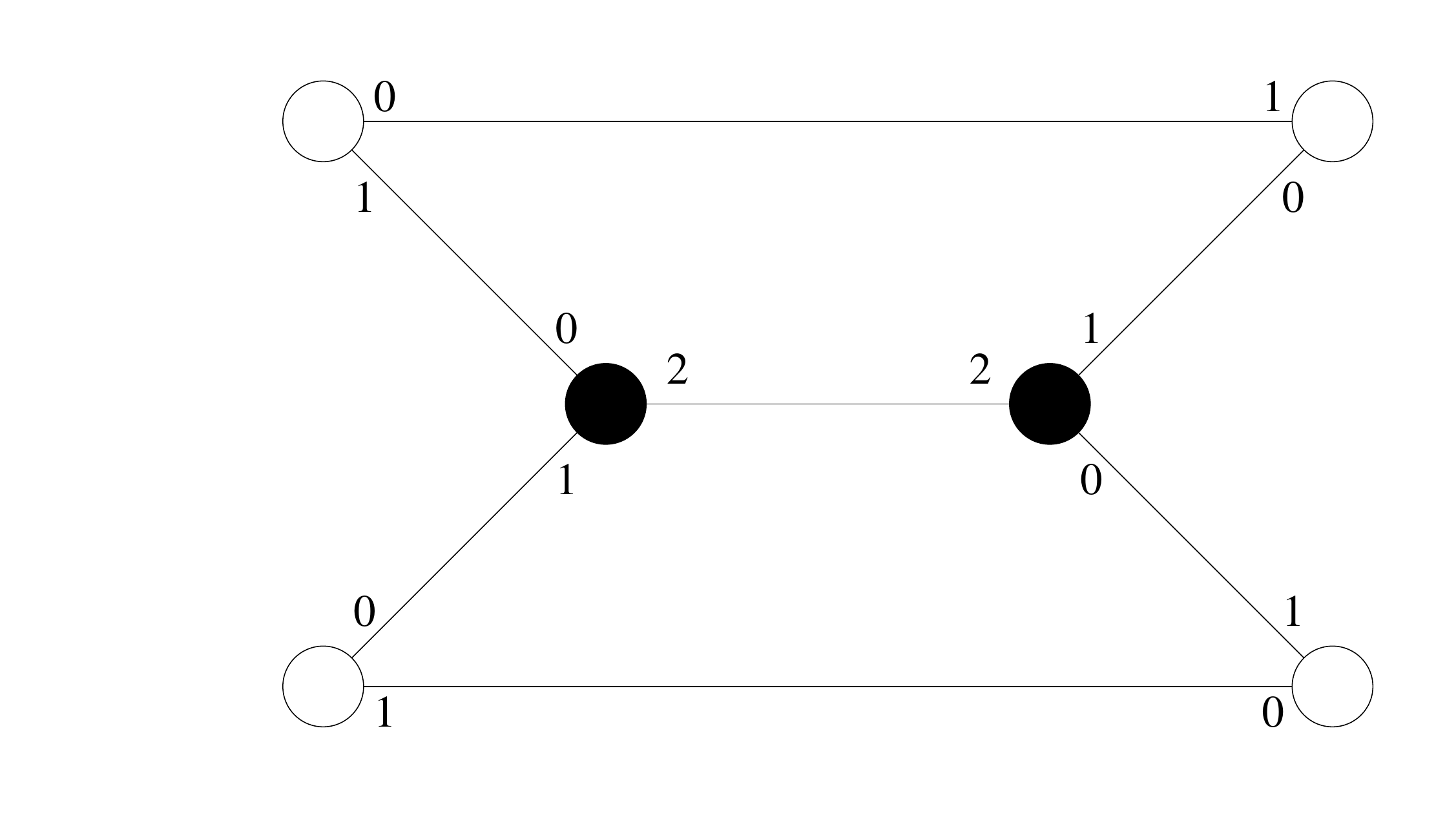_t}}
} & \multicolumn{1}{c|}{
\scalebox{0.24}{\input{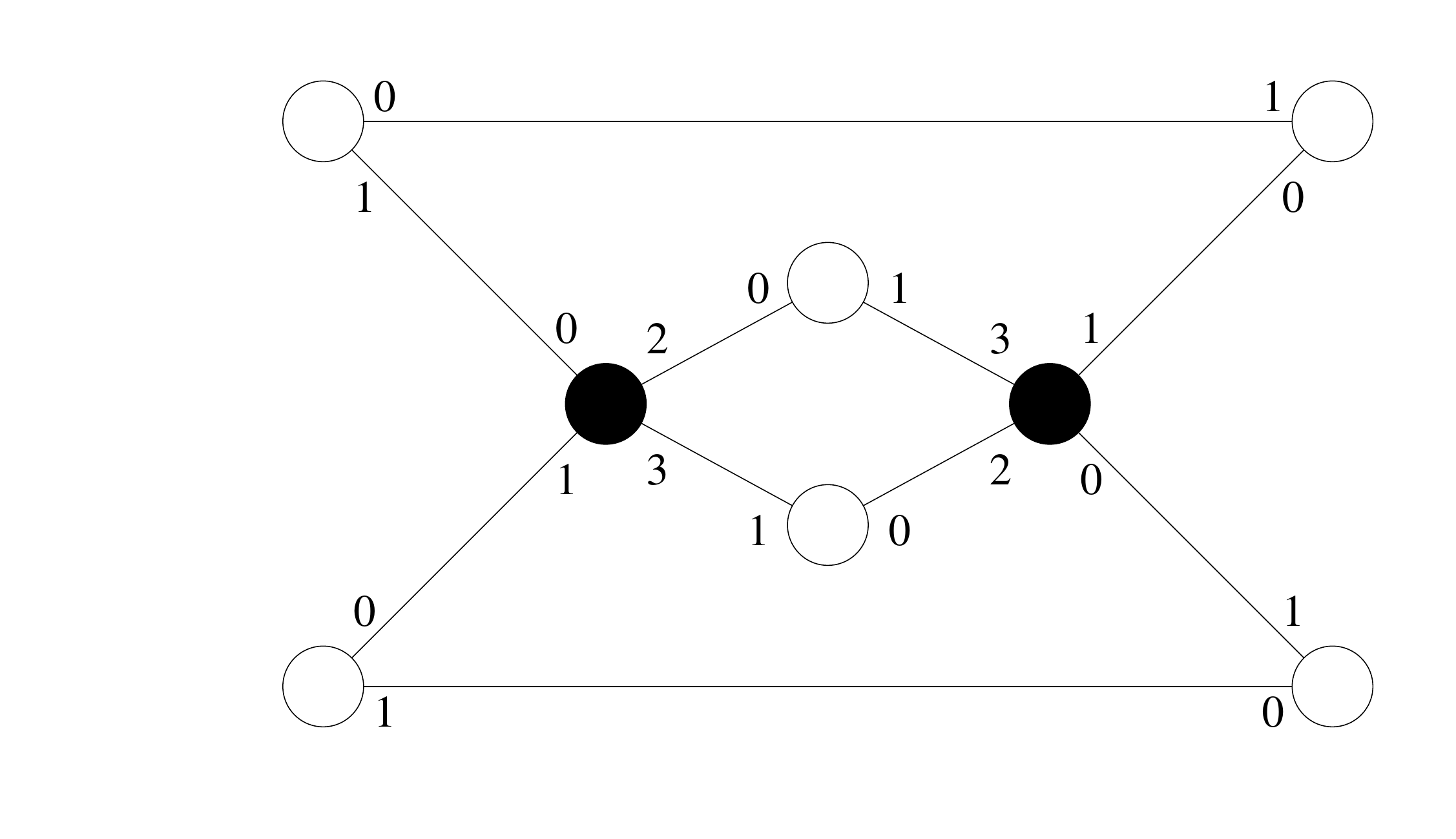_t}}
}\\

\cline{1-2}
\multicolumn{2}{|c|}{
\scalebox{0.24}{\input{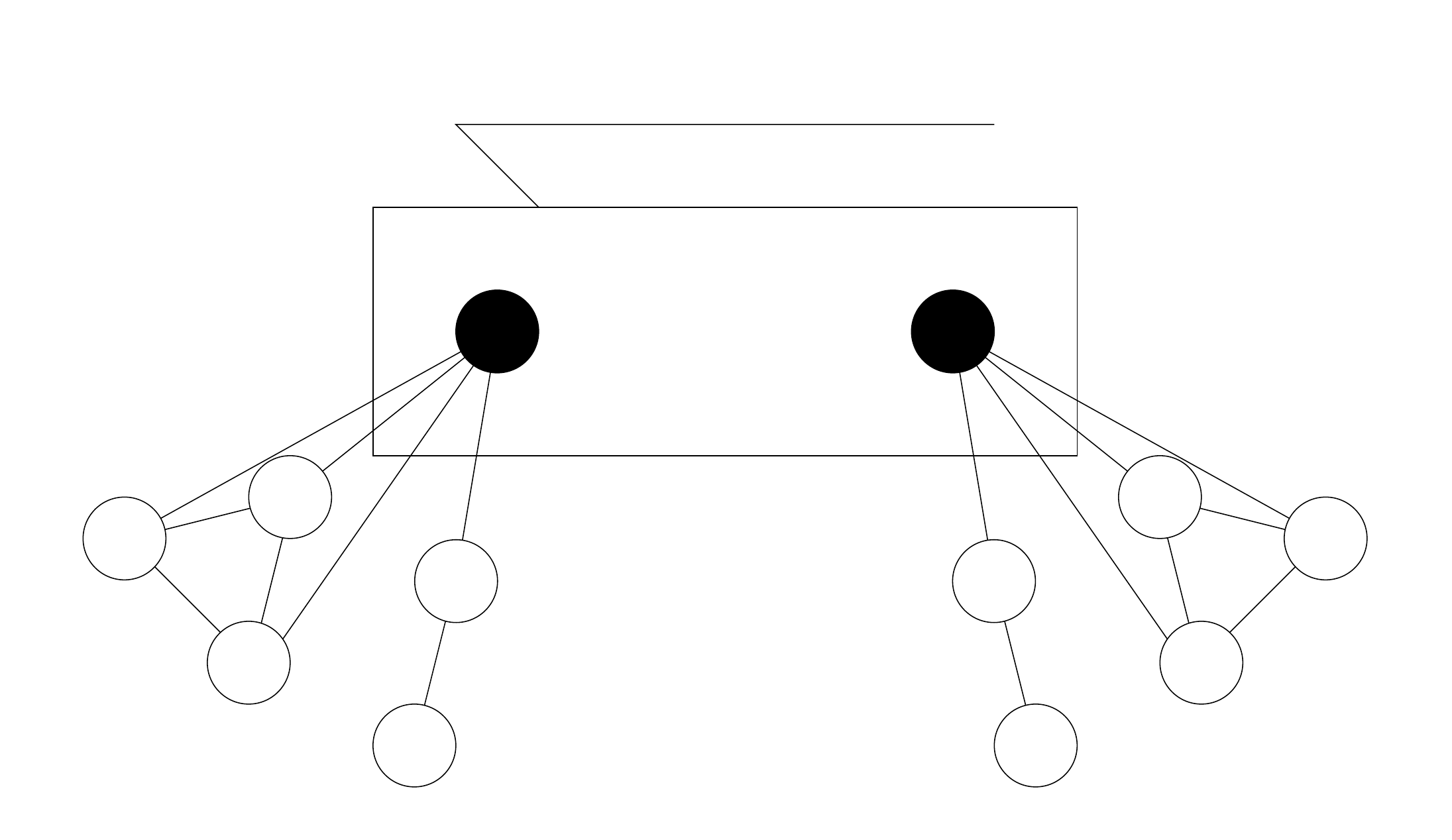_t}}
}\\
\cline{1-2}

\end{tabular}

\caption{\label{fig.yk} a) Graphs $G_1$ and $G_2$ from~\cite{YK3}. b) Graphs $G'_1$ and $G'_2$. c) Schematic representation of graphs $H_1$ and $H_2$, in which the rectangle represents one of the graphs $G_1$, $G_2$, $G'_1$, or $G'_2$.  }
\end{figure}

The graphs $G'_1$ and $G'_2$ in Figure~\ref{fig.yk}.b are obtained from  $G_1$ and $G_2$, respectively, replacing the edge connecting the two black nodes by two paths of length 2.
These graphs have the following properties:
\begin{itemize}
\item they are non-isomorphic;
\item both of them have size 8 and diameter at most 4;
\item all black nodes in both graphs have the same view;
\item the black nodes are at distance 2 in both graphs.
\end{itemize}
For $D\ge 3$ and even $n\ge D+6$ we construct two graphs $H_1$ and $H_2$, of size $n$ and diameter $D$, as follows.
If $D$ is odd and $D=2k+1$, we obtain $H_1$ by  connecting two identical copies of a path of length $k$ and of a clique of size $(n-2k -6)/2$ to each black node of the graph $G_1$, in a symmetric fashion.
We obtain $H_2$ by doing the same for graph $G_2$.
If $D$ is even and $D=2k+2$, we obtain $H_1$ by  connecting two identical copies of a path of length $k$ and of a clique of size $(n-2k -8)/2$ to each black node of the graph $G'_1$, in a symmetric fashion.
We obtain $H_2$ by doing the same for graph $G'_2$. (See Figure~\ref{fig.yk}.c.)

Notice that graphs $H_1$ and $H_2$ have diameter $D$ and size $n$, they are non-isomorphic both for $D$ odd and for $D$ even, and all black nodes in both graphs have the same view. Hence, in view of Proposition~\ref{propSameHistory}, black nodes cannot correctly perform (even anonymous) topology recognition in $H_1$ or in $H_2$ without any advice. 
\end{proof}


\section{Time above $D$}
In this section we study the size of advice sufficient to perform topology recognition in arbitrary time larger than $D$ (but smaller than $2D+1$), i.e., large enough for allowing each node to see all nodes and edges of the graph.
We first give an algorithm using advice of size $O(1+\log(n)/k)$ that performs topology recognition in time $D+k$.

\algo{TR-2}
\noindent{\bf Advice:}\\
Let $G$ be a graph of size $n$ and diameter $D$. Let $t=\lceil k/4\rceil -1$.

If $t=0$ then the oracle gives a unique label of size $\lceil \log n\rceil$ as advice to each node.

Suppose that $t\ge 1$.
The oracle picks a set of nodes $X$ satisfying Lemma~\ref{lemDominating}, for $r=2t$.
Then it chooses a unique label $\ell(v)$ from the set $\{0, \ldots,  n -1\}$ for each node $v$ in $X$.
For any node $u \in N_{t-1}(v)$ let $\pi_v(u)$ be the lexicographically smallest shortest path (coded as a sequence of consecutive port numbers) from $u$ to $v$.
Sort the nodes $u$ in $N_{t-1}(v)$ in the increasing lexicographic order of $\pi_v(u)$.
The binary representation of $\ell(v)$ is partitioned into $|N_{t-1}(v)|$ consecutive segments, each of length at most $\lceil ( \log n )  / |N_{t-1}(v)|\rceil$.
The oracle assigns the first segment, with a trailing bit $1$,  as advice to node $v$.
For $1<i\le |N_{t-1}(v)|$, the $i$-th segment, with a trailing bit 0, is assigned as advice to the $i$-th  node of $N_{t-1}(v)$. (Notice that some nodes in $N_{t-1}(v)$ could receive only the trailing bit 0 as advice.) All other nodes get the empty string as advice.
Let $A_1$ be the above assignment of advice.

\noindent{\bf Node protocol:}\\
We first describe the protocol when $t\ge 1$.
In round $i$, each node $u$ sends its view at depth $i-1$, decorated according to $A_1$, to all its neighbors;
it receives such views from all its neighbors and constructs its view at depth $i$, decorated according to $A_1$.
This task continues until termination of the algorithm.

Each node $u$ whose advice has a trailing bit $0$ assigns to itself a temporary label $\ell'(u)$ as follows.
Let $s$ be the smallest round number at which node $u$ sees a node with advice with a trailing bit $1$ in its view decorated according to $A_1$  (at depth $s$).
The label $\ell'(u)$ is the lexicographically smallest shortest path, defined as a sequence of consecutive port numbers, from $u$ to any node with advice with a trailing bit 1 in its  view, decorated according to $A_1$, at depth $s$.

Let $u$ be a node whose advice has a trailing bit $0$. After reconstructing its label $\ell'(u)$, node $u$ sends $(\ell'(u),A_1(u))$ to the node $v\in X$ closest to it, along the lexicographically smallest shortest path that determined label $\ell'(u)$. Nodes along this path relay these messages piggybacking them to any message that they should send in a given round.

Each  node $v\in X$ (having a trailing bit 1 in its advice) computes $t$ as the first depth in which its view, decorated according to $A_1$ contains nodes without any advice.
In round $2t$ each such node reconstructs its label $\ell(v)$ from messages $(\ell'(u), A_1(u))$ it received (which it sorts in the increasing lexicographic order of $\ell'(u)$), and from $A_1(v)$.

Let $A_2$ be the decoration of $G$ where each node $v$ in the set $X$ is mapped to the binary representation of its unique label $\ell(v)$, and each node outside of $X$ is mapped to the empty string.

Nodes outside of $X$  start constructing their decorated view, according to $A_2$. This construction is put on hold by a node $v$ in $X$ until the time when it reconstructs its unique label $\ell(v)$.
Upon reconstructing its label $\ell(v)$, each node $v \in X$ starts constructing its view decorated according to $A_2$, hence allowing its neighbors to construct their view, decorated according to $A_2$, at depth 1.
This process continues for $2t$ steps, during which nodes construct and send their views at increasing depth, decorated according to $A_2$.

Each node $u$ assigns a label $\ell'' (u)$ to itself as follows.
Let $s'$ be the smallest depth at which the view of $u$, decorated according to $A_2$, contains a node $v$ with label $\ell(v)$ and let $\lambda(u,v)$ be the lexicographically smallest path connecting $u$ to such a node $v$ (coded as a sequence of consecutive port numbers). Node $u$ sets $\ell''(u)=(\lambda(u,v), \ell(v))$.


Let $A_3$ be the decoration of $G$ where each node $u$ is mapped to $\ell''(u)$.
(We will prove that $A_3$ is an injective function.)
Upon computing its value in $A_3$ each node starts constructing its decorated view, according to $A_3$.
In each round $t'$, node $u$ checks for newly discovered values of $A_3$.
As soon as there are no new values, node $u$ reconstructs the labeled map and outputs it.
Then node $u$ computes the diameter $D$ of the resulting graph and continues to send its views, decorated according to $A_1$, according to  $A_2$, and according to $A_3$, at increasing depths, until round $D + 4t+1$.  After round $D + 4t+1$ node $u$ terminates.

If $t=0$, the protocol consists only of the last step described above, with decoration $A_3$ replaced by the assignment of advice given to nodes by the oracle.
\procend

\begin{theorem}\label{ubTimeD+k}
Let $0<k\le D$.
Algorithm {\tt TR-2} completes topology recognition for all graphs of size $n$ and diameter $D$ within time $D+k$, using advice of size $O(1+ (\log n)/ k)$.
\end{theorem}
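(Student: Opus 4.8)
### Proof Proposal

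The plan is to prove two things: first, that Algorithm {\tt TR-2} is correct (every node outputs a faithful labeled map), and second, that it terminates within time $D+k$ using advice of the claimed size. I would organize the argument around the chain of decorations $A_1 \to A_2 \to A_3$, showing that each stage succeeds and bounding the number of rounds it consumes.

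For correctness, I would first handle the trivial case $t=0$ (equivalently $k \le 4$): here the oracle supplies unique labels of size $\lceil \log n \rceil = O(1 + (\log n)/k)$ directly, and Lemma~\ref{lemUniqueLabels} gives topology recognition from the view at depth $D+1 \le D+k$. For $t \ge 1$, the key claims, in order, are: (i) each node $u$ with a trailing-bit-$0$ advice reconstructs a temporary label $\ell'(u)$ that is unique among the nodes assigned to the same center $v \in X$ — this follows because those nodes all lie in $N_{t-1}(v)$, the paths $\pi_v(u)$ were chosen distinct and lexicographically minimal, and the round $s$ at which $u$ first sees a trailing-bit-$1$ node is at most $t-1 \le r$; (ii) the routing of $(\ell'(u), A_1(u))$ back to $v$ along the path that determined $\ell'(u)$ delivers, by round $2t$, exactly the set of segments whose concatenation (sorted by $\ell'(u)$) is the binary representation of $\ell(v)$ — here I need that distinct nodes of $X$ are at distance $> r = 2t$ (Lemma~\ref{lemDominating}), so the segments of different centers do not interfere, and that $v$ can detect $t$ locally as the first depth at which an empty-advice node appears; (iii) consequently $A_2$ is well-defined with $\ell(v)$ unique over $X$, and since $X$ is an $r$-dominating set, every node lies within distance $2t$ of some $v \in X$, so after the $2t$ rounds of $A_2$-view construction every node sees at least one labeled center and can form $\ell''(u) = (\lambda(u,v), \ell(v))$; (iv) $A_3$ is injective — if $\ell''(u) = \ell''(u')$ then they reference the same center $v$ (labels $\ell(v)$ are unique) via the same port-sequence $\lambda$, and since $\lambda$ is a path in $G$ from $v$, following it from $v$ lands at a unique node, so $u = u'$. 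Once $A_3$ is injective, Lemma~\ref{lemUniqueLabels} finishes correctness: each node needs its $A_3$-decorated view at depth $D+1$.

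For the timing, I would account for the rounds stage by stage. Computing $\ell'$ takes at most $t-1$ rounds; routing to the centers adds at most another $t$ (the messages are piggybacked and centers are within distance $t$ of the relevant nodes, with the relay paths having length $< t$), so $\ell(v)$ is available by round $\le 2t$; the $A_2$-view construction runs $2t$ further rounds, giving $\ell''$ by round $\le 4t$; then each node needs $D+1$ more rounds to obtain its $A_3$-decorated view at depth $D+1$. A naive sum gives roughly $D + 4t + 1$; since $t = \lceil k/4 \rceil - 1$, we have $4t \le k - 1$ (checking the ceiling carefully when $k$ is not a multiple of $4$), so $D + 4t + 1 \le D + k$. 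The slightly delicate point is that the stages overlap in time rather than running strictly sequentially (views at increasing $A_1$-depth are built concurrently with everything else, $A_2$-construction is "put on hold" per node, etc.), so I would argue the bound via an invariant: after round $4t + j$ every node has its $A_3$-decorated view at depth $j$, proved by induction on $j$ using Proposition~\ref{propSameHistory}-style reasoning that a node's behavior through round $\tau$ depends only on its decorated view at depth $\tau$. The termination instruction at round $D + 4t + 1$ is then consistent, and $D \le D + 4t + 1 \le D+k$ shows the output is produced in time.

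The advice size is the easy part: each node receives either the empty string, a single bit, or one segment of $\ell(v)$ of length at most $\lceil (\log n)/|N_{t-1}(v)|\rceil \le \lceil (\log n)/1 \rceil$ plus a trailing bit — but more usefully, when $t \ge 1$ the segments have length $O((\log n)/k)$ only if $|N_{t-1}(v)|$ is large; in the worst case $|N_{t-1}(v)| = 1$ and the whole label of size $\lceil \log n \rceil$ goes to $v$. Wait — this needs care: the bound $O(1 + (\log n)/k)$ as stated must therefore come from a different accounting, namely that we only claim $O(1 + (\log n)/k)$, and when $k = \Theta(D) = \Theta(n)$ this is $O(1)$, whereas a single center could still carry $\Theta(\log n)$ bits. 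I expect the main obstacle is exactly this: ensuring the label $\ell(v)$ is genuinely spread over $\Theta(k)$ nodes of $N_{t-1}(v)$, which requires a lower bound on $|N_{t-1}(v)|$ in terms of $t$ — true in connected graphs where a ball of radius $t-1$ around any node has at least $\min(t, \text{something})$ vertices — so that each segment has length $O(1 + (\log n)/t) = O(1 + (\log n)/k)$. I would isolate this as the crux of the proof and prove a small auxiliary bound $|N_{t-1}(v)| \ge t$ (or the relevant variant) for connected graphs, handling small-diameter corner cases separately.
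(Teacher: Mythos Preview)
Your proposal is correct and follows essentially the same approach as the paper: the same $A_1 \to A_2 \to A_3$ chain, the same timing accounting summing to $D+4t+1 \le D+k$, and the same advice-size bound via $|N_{t-1}(v)| \ge t$. In fact you are slightly more careful than the paper, which simply asserts ``the size of the set $N_{t-1}(v)$ is at least $t$'' without justification---your instinct to isolate this as the crux (it follows since every node has eccentricity at least $D/2 > t-1$, so balls of radius $t-1$ contain a node at each distance $0,\dots,t-1$) is well placed.
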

\begin{proof}
Let $G$ be a graph of size $n$ and diameter $D$. If $t=\lceil k/4\rceil -1=0$, then $k$ is constant and each node receives advice of size $\lceil \log n\rceil$. In this case the proof follows from Lemma~\ref{lemUniqueLabels}.
Hence in the sequel we can assume that $t\ge 1$.

Let $X$ be the set of nodes satisfying Lemma~\ref{lemDominating}, for $r=2t$,  selected by the oracle.
Due to the properties of the set $X$, the sets $N_{t}(v)$ and consequently the sets $N_{t-1}(v)$
 are pairwise disjoint for distinct nodes $v\in X$.
Hence, no node would receive more than one segment of a label $\ell(v)$, for some node $v\in X$.

Within round $2t-2$, each node $v\in X$ receives the complete set of segments $A_1(u)$,  for $u\in N_{t-1}(v)$, of its label $\ell(v)$.
Moreover, it can reconstruct the order of the segments according to the lexicographic order of temporary labels $\ell'(u)$ received together with the corresponding segments. 
Hence, by round $2t$, all nodes in $X$ know their unique label $\ell(v)$.

Since each node of $G$ is at distance at most $2t$ from some node in $X$ and $A_2$ is an injective function for nodes in $X$, the view decorated according to $A_2$ at depth $2t$ of each node contains some uniquely labeled node.
Consequently, $A_3(u)$ can be computed by each node $u$ within $4t$ rounds.
The decoration $A_3$ is an injective function, due to uniqueness of labels $\ell(v)$ and due to the fact that, if nodes $u\ne u'$ used the same node $v\in X$ to compute $\lambda(u,v)$ and  $\lambda(u',v)$, respectively, then $\lambda(u,v)\ne \lambda(u',v)$.
In view of  Lemma~\ref{lemUniqueLabels}, all nodes accomplish (labeled) topology reconstruction within additional $D+1$ rounds after all nodes have computed their value according to decoration $A_3$. Hence the algorithm terminates within time $D+4t+1\le D+k$.

Since the size of the set  $N_{t-1}(v)$ is at least $t$, the size of advice is at most 
$\lceil ( \log n)  / t\rceil +1 \in O(1+(\log n) / k)$, which completes the proof of the theorem.
\end{proof}

We now provide a lower bound on the minimum size of advice sufficient to perform topology recognition. This bound matches the upper bound given by Algorithm~{\tt TR-2} in the time-interval $[D+1, \ldots, 3\lfloor D/2\rfloor]$.

\begin{theorem}\label{lbD+k}
Let $2\le D\le  \alpha n$ and $0 < k \le  D/2$.
The size of advice needed to perform topology recognition in time $D+k$ in the class of graphs of size $n$ and diameter $D$
is in $\Omega((\log n) / k)$.
\end{theorem}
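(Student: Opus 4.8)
We prove the bound through Proposition~\ref{propSameHistory}: it suffices to show that if the size of advice is below $\varepsilon(\log n)/k$ for a suitably small constant $\varepsilon>0$, then there exist two non-isomorphic graphs $G,G'\in\cC(n,D)$, a node $u$ of $G$ and a node $u'$ of $G'$, whose views at depth $D+k$, decorated according to the respective advice, coincide; then the algorithm outputs the same graph at $u$ and at $u'$ and is wrong on at least one of $G,G'$. So the whole task is to build a large enough family of ``confusable'' instances.

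The plan is to design a family $\mathcal{F}$ of pairwise non-isomorphic graphs of size $n$ and diameter $D$, all built on a common highly symmetric ``scaffold'' $S$ carrying a distinguished node $u$, each obtained from $S$ by inserting a small piece of variable structure that can be realized in $n^{\Omega(1)}$ ways (equivalently, encoding $\Theta(\log n)$ bits). The scaffold will be chosen so that (i) with advice ignored, $u$ has the \emph{same} view at depth $D+k$ in every member of $\mathcal{F}$, so $u$ cannot separate them unaided; and (ii) the only vertices whose advice can influence $u$'s decorated view at depth $D+k$ in a way that lets it distinguish two members lie in a ``frontier'' set $Y$ of size $O(k)$ — everything $u$ sees outside $Y$ being, by the symmetry of $S$, positionally ambiguous, so that advice placed there can be matched between two members of $\mathcal{F}$ without changing $u$'s decorated view. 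Granting (i) and (ii): with advice of size $s$, the decorated view of $u$ in any member of $\mathcal{F}$ is determined by the common undecorated view together with the $\le s\cdot|Y|=O(sk)$ bits read off $Y$, so at most $2^{O(sk)}$ distinct decorated views occur over $\mathcal{F}$; if $s<\varepsilon(\log n)/k$ this is smaller than $|\mathcal{F}|=n^{\Omega(1)}$, hence two non-isomorphic members give $u$ identical decorated views and we conclude as above. The room to make $S$ simultaneously symmetric and of the correct order, and to attach the variable structure deep enough that its distinguishing part is out of $u$'s reach within only $k$ extra rounds, is precisely what the hypotheses $D\le\alpha n$ (a slack of $\Theta(n)$ vertices, used both for the scaffold and to encode the $\Theta(\log n)$ bits) and $k\le D/2$ are there to supply.

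The delicate point, and the main obstacle, is establishing (ii): although within $D+k\ge D$ rounds $u$ literally sees a copy of every vertex, and hence every advice string, of the graph, one must prove that the symmetry of the construction makes all of that information useless to $u$ except on the $O(k)$ vertices of $Y$ — concretely, that for any two members of $\mathcal{F}$ any advice on the vertices outside $Y$ can be transported along the identification of their depth-$(D+k)$ views so as to leave $u$'s decorated view unchanged. This is what forces the strong symmetry requirements on $S$ and on the gluing of the variable structure, which then have to be reconciled with $G,G'\in\cC(n,D)$.
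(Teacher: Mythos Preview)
Your plan has a genuine gap at exactly the point you flag as ``the delicate point''. Claim~(ii) --- that the decorated view of $u$ at depth $D+k$ is determined by the undecorated view together with only the $O(sk)$ advice bits on a frontier~$Y$ --- is not merely hard to establish; for a family of size only $n^{\Omega(1)}$ it is false. The oracle sees the entire graph before assigning advice, and since $D+k>D$, node $u$ sees \emph{every} vertex of the graph in its depth-$(D+k)$ view. Hence the oracle can simply write the $O(\log n)$ bits identifying which member of $\mathcal{F}$ we are in onto the vertices of $V\setminus Y$; with $|V\setminus Y|=\Theta(n)$ this costs only $s=O((\log n)/n)$ bits per node, far below your claimed threshold $\varepsilon(\log n)/k$. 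Those bits sit at positions of the view that you yourself require to be the same across all members of~$\mathcal{F}$, so $u$ reads them off unambiguously and recovers the graph. Your ``transportation'' sentence does not help: it says that for any advice on $G$ there \emph{exists} advice on $G'$ matching the decorated view, but the oracle is under no obligation to use that matching advice on~$G'$.

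The paper's proof circumvents this by working with a much larger family and a two-stage pigeonhole. The family~$\cB(n,D,k)$ of \emph{brooms} has $m\approx n/k$ ``bristles'' (paths of length~$k$) whose tails are joined by a perfect matching; there are at least $(m/2)!$ such matchings. First one pigeonholes over \emph{all} advice assignments to \emph{all} $n$ nodes (there are $\le 2^{((c\log n)/k+1)n}$ of them), obtaining many brooms whose corresponding nodes carry identical advice. Only then does the view argument enter: among brooms with a common advice assignment, the decorated view at depth $D+k$ of a handle node depends just on which \emph{decorated} bristle is matched to each tail, and there are at most $2^{((c\log n)/k+1)k}$ decorated bristles, hence at most $2^{((c\log n)/k+1)n}$ such views. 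The inequality $(m/2)!>2^{2((c\log n)/k+1)n}$ (which uses $m\gtrsim n(1-\alpha)/k$ and $k\le D/2$) finishes the count. The essential difference from your plan is that the paper never tries to restrict the oracle's useful advice to $O(k)$ nodes; it absorbs the full $2^{O(sn)}$ freedom of the oracle by making $|\mathcal{F}|$ of order $(n/k)^{\Theta(n/k)}$ rather than $n^{\Theta(1)}$.
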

\begin{proof}
Our lower bound will be proved using the following classes $\cB(n,D,k)$ of graphs of size $n$ and diameter $D$, called {\it brooms}.
We define these classes for $n$ sufficiently large and for $k<\log n$. {(For $k\ge \log n$ Proposition~\ref{lb2D+1} applies.)}
Nodes in a  broom $B \in \cB(n,D,k)$ are partitioned into three sets, called the {\em bristles}, the {\em stick}, and the {\em handle}.
Let $m$ be the largest even integer such that $km + D - k < n$.

The set  bristles consists of $km$ nodes, partitioned into $m$ pairwise disjoint sets $B_1, \ldots,B_m$, each of size $k$. The stick consists of $D-k$ nodes, and the remaining $n - (km + D - k)$ nodes are in the handle.
Hence the bristles, the stick, and the handle are non-empty sets.
We now describe the set of edges in each part.

Edges of the bristles are partitioned into two sets, $E_1$ and $E_2$. 
Edges in $E_1$ connect nodes of each set $B_i$ into a path with port numbers 0 and 1 at each edge. We call  {\it head} of each set $B_i$ the endpoint of the path to which port number 1 has been assigned, and {\it tail} the other  endpoint (to which port number 0 has been assigned).
Notice that sets $B_i$ can be of size 1, in which case heads coincide with tails.
Edges in $E_2$ form a perfect matching $M$ among tails of the bristles. These edges have port number 1 at both endpoints. 

Edges of the stick form a path of length $D-k-1$ with port numbers 0 and 1 at each edge. (Notice that this path is of length 0, i.e., the stick consists of a single node, when $D=2$.) 
The handle has no edges.

The bristles, the stick, and the handle are connected as follows.
Let $u$ be the endpoint of the stick to which port number 1 has been assigned, and let $v$ be the other endpoint of the stick (to which port number 0 has been assigned). Nodes $v$ and $u$ coincide when $D=2$.
Node $v$ is connected to the head of each set $B_i$ by an edge with port numbers $i$ at $v$ and 0 at each head.
Node $u$ is connected to each node in the handle. Port numbers at $u$ corresponding to these connecting edges are numbered $\{0, 2, \dots, n - (km + D - k) \}$, if $u\neq v$, and 
$\{0, m+1, \dots, n - (k-1) m -2 \}$, if $u=v$. Nodes in the handle are of degree 1, so they have a unique port with number 0. See Figure~\ref{fig.fan} for an example of a broom in $\cB(23,6,3)$.
Notice that all brooms in $\cB(n,D,k)$ are defined over the same set of nodes and share the same edges, apart from those in sets forming perfect matchings among tails of the bristles. Moreover notice that growing the length of the bristles above $\lfloor D/2\rfloor$ would result in a graph of diameter larger than $D$ {(due to the distance between any pair of unmatched tails of the bristles)}, which explains the assumption $k\le D/2$.

\begin{figure}
\centering
\includegraphics[scale=0.45]{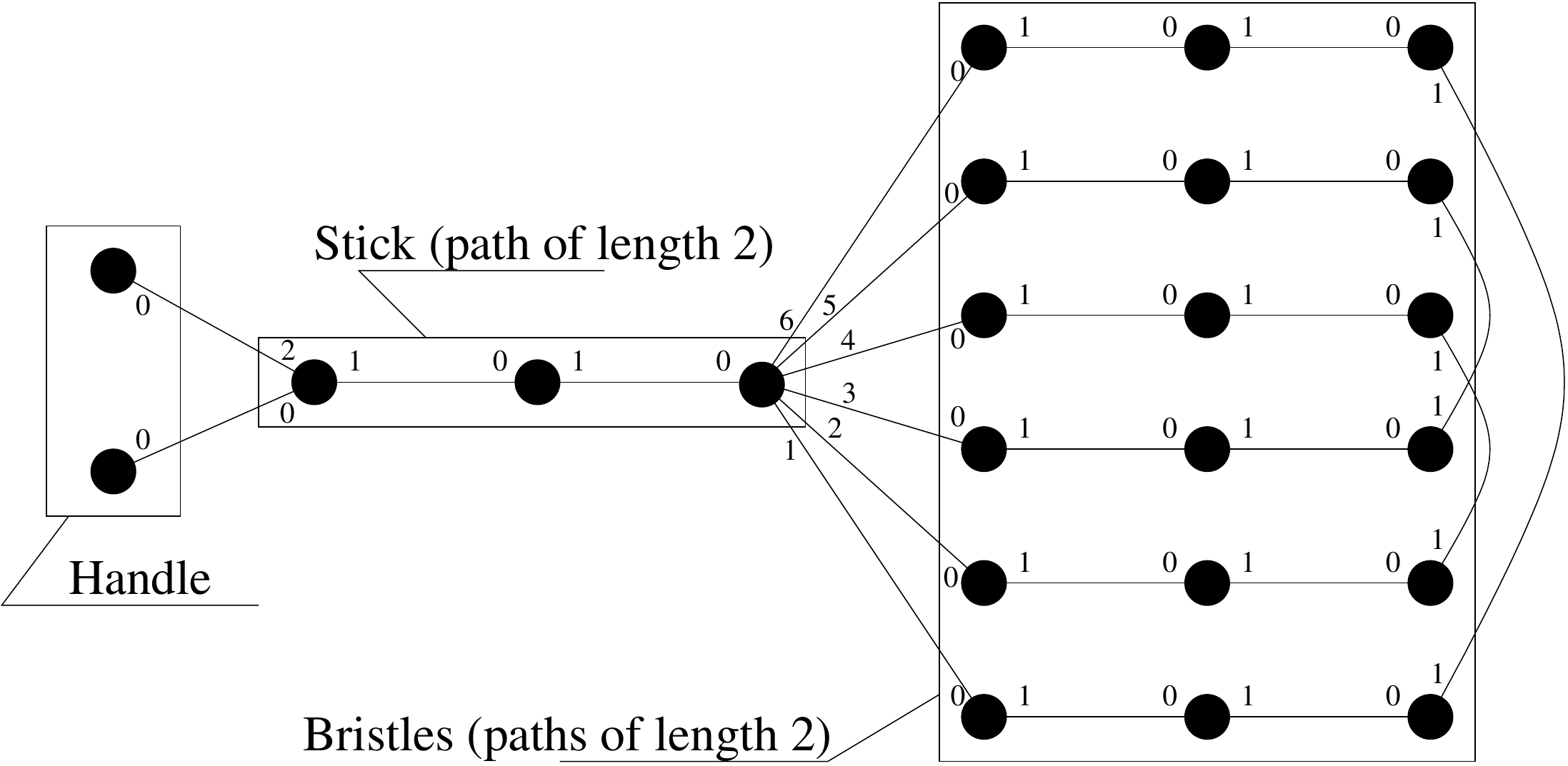}
\caption{\label{fig.fan} A broom in $\cB(23, 6,3)$.}
\end{figure}

For two brooms $B'$ and $B''$ in $\cB(n,D,k)$ we define {\em corresponding} nodes as follows. Let $h'$ and $h''$ in $B'$ and $B''$, respectively, be the nodes in the handles whose only incident edge has port number 0 at both endpoints.
Node {$u'\in B'$} corresponds to node {$u''\in B''$}, if and only if, the (unique) shortest path (defined as the sequence of  port numbers on consecutive edges) from $h'$ to $u'$ is the same as the shortest path from $h''$ to $u''$.

The idea of the proof is to show that if the size of advice is smaller than $c (\log n) / k$, for a sufficiently small constant $0<c<1$, then there exist two brooms in $\cB(n,D,k)$, whose corresponding nodes receive the same advice, for which the decorated view at depth $D+k$ of each node in the handle remains the same in both brooms. 
Since different brooms are non-isomorphic, this will imply the theorem, in view of Proposition~\ref{propSameHistory}.

Observe that for $k \in  \Omega(\log n)$, we have that $c(\log n) / k$ is constant, and  $\Omega(1)$ is a lower bound on the size of advice for topology recognition, regardless of the allowed time.
Hence we do not need to define brooms when $k \ge \log n$ to prove the theorem.

We now provide a lower bound on the size of the class $\cB(n,D,k)$. 
This size depends on the number $m$ of tails of the bristles among which perfect matchings $M$ can be defined.
For given $n$ and $k$, the size of the class $\cB(n,D,k)$ cannot increase when $D$ grows. 
Hence the class is smallest for the largest considered value of $D$, i.e., $D=\lfloor \alpha n\rfloor$.
We do the estimation for this value of $D$.

The number of perfect matchings among tails is at least $(m-1)\cdot (m-3)\cdot (m-5)\cdot  \ldots \cdot 3\cdot 1> (m/2)! $.


Suppose, from now on, that the size of advice is bounded by $c (\log n)/k$, for some constant $0<c<1$.
Then there are at most $2^{(c(\log n)/k +1) n}$ ways of assigning advice to nodes of a broom in $\cB(n,D,k)$. 
Hence there are at least $(m/2)!/2^{(c(\log n)/k +1) n}$ brooms in $\cB(n,D,k)$ for which corresponding nodes get the same advice. Fix one such assignment $A$ of advice.

We now provide an upper bound on the number of distinct decorated views, at depth $D+k$, of any node in the handle, when advice is assigned to nodes according to $A$.
Consider two brooms $B'$ and $B''$ in $\cB(n,D,k)$, decorated according to assignment $A$.
Let $M'$ and $M''$ be the perfect matchings among tails of the bristles corresponding to brooms $B'$ and $B''$, respectively.
$B'$ and $B''$ result in distinct decorated views, at depth $D+k$, of corresponding nodes in the handle, if and only if, there exist corresponding tails of the bristles $t_i'\in B'$ and $t''_i\in B''$, such that the decorated path $B'_j$, whose tail  $t'_j$ is matched to $t'_i$ in $M'$ and the decorated path $B''_h$, whose tail  $t''_h$ is  matched to $t''_i$ in $M''$, are different.
The number of distinct decorated paths $B_i$ of length $k-1$ is at most $x=2^{(c(\log n)/k +1 )k}$.
Since $m\le n/k$, it follows that there are at most $x^{n/k}= 2^{(c(\log n)/k +1 )n}$ distinct decorated views, at depth $D+k$, for any node in the handle, for assignment $A$.

We will show that the following inequality
$(m/2)! > 2^{2(c(\log n)/k +1 )n }$
which we denote by (*),
holds for $c<(1-\alpha)/128$, when $n$ is sufficiently large. 
 Indeed, for sufficiently large $n$ we have $m > n(1-\alpha)/(2k)$; in view of $k<\log n$, taking the logarithms of both sides we have 
 $$\log \left(\frac{m}{2}!\right)>\frac{m}{4} \log \frac{m}{4}>\frac{n(1-\alpha)}{8k} \log \frac{n(1-\alpha)}{8k} \ge  2\left(\frac{c\log n}{k} +1 \right)n.$$

Inequality (*) implies that 
$(m/2)! / 2^{(c(\log n)/k +1 )n} > 2^{(c(\log n)/k +1 )n}$. Hence the number of brooms from $\cB(n,D,k)$, decorated according to assignment $A$, exceeds the number of distinct decorated views, at depth $D+k$, of any node in the handle, for these brooms. It follows {from the pigeonhole principle}  that some decorated view corresponds to different brooms from $\cB(n,D,k)$. In view of Proposition~\ref{propSameHistory}, this proves that (even anonymous) topology recognition in time $D+k$, for the class of graphs of diameter $D$ and size $n$, requires advice of size at least $(1-\alpha)(\log n) /(128k) \in \Omega(\log n/k)$.
\end{proof}

Since the lower bound $\Omega(1)$ on the size of advice holds regardless of time, 
theorems~\ref{ubTimeD+k} and \ref{lbD+k} imply the following corollary.

\begin{corollary}\label{corD+k}
Let $D\le  \alpha n$ and $0 < k \le  D/2$.
The minimum size of advice sufficient to perform topology recognition in time $D+k$ in the class of graphs of size $n$ and diameter $D$ is in $\Theta(1 + (\log n / k))$. 
\end{corollary}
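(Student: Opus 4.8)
The plan is to obtain the corollary by simply combining Theorem~\ref{ubTimeD+k} with the two lower bounds already available, one specific to this time regime and one universal. No new construction or algorithm is needed.

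For the upper bound I would note that the hypothesis $0<k\le D/2$ of the corollary implies $0<k\le D$, so Theorem~\ref{ubTimeD+k} applies directly: Algorithm~{\tt TR-2} performs topology recognition in time $D+k$ on every graph in $\cC(n,D)$ using advice of size $O(1+(\log n)/k)$. For the lower bound, the key point is that neither of the two bounds proved so far suffices on its own and one must take their maximum. On one hand, Theorem~\ref{lbD+k}, whose hypotheses $2\le D\le\alpha n$ and $0<k\le D/2$ are exactly those of the corollary, shows that advice of size $\Omega((\log n)/k)$ is necessary. On the other hand, Proposition~\ref{lb2D+1} (together with the $n$-node ring example in the introduction) shows that at least one bit of advice is needed for topology recognition in the class of size-$n$, diameter-$D$ graphs, and, crucially, this holds regardless of the allotted time, in particular in time $D+k$. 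Hence the required advice size is bounded below by both $\Omega((\log n)/k)$ and $\Omega(1)$, so by $\Omega(\max\{1,(\log n)/k\})=\Omega(1+(\log n)/k)$. Combining the two halves gives that the minimum size of advice sufficient for topology recognition in time $D+k$ in $\cC(n,D)$ is simultaneously $O(1+(\log n)/k)$ and $\Omega(1+(\log n)/k)$, i.e.\ $\Theta(1+(\log n)/k)$.

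I do not expect a substantive obstacle here: all the real work resides in Theorems~\ref{ubTimeD+k} and~\ref{lbD+k} and Proposition~\ref{lb2D+1}. The only points needing a line of care are (i) explicitly combining the two lower bounds via a maximum, since it is this combination — and not either bound alone — that produces the additive ``$1$'' term alongside the ``$(\log n)/k$'' term, and (ii) checking that the hypotheses of the cited results line up with those of the corollary, which they do up to negligible edge cases (very small $D$, or $n$ too small or of the wrong parity) that do not affect the asymptotic statement and can be absorbed into the constants.
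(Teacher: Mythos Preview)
Your proposal is correct and matches the paper's approach exactly: the paper derives the corollary in one sentence, noting that the $\Omega(1)$ lower bound holds regardless of time and then citing Theorems~\ref{ubTimeD+k} and~\ref{lbD+k}. Your write-up is simply a more explicit unpacking of that sentence, including the observation that the two lower bounds combine via a maximum to give $\Omega(1+(\log n)/k)$.
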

\section{Time $D$}

In this section we provide asymptotically tight upper and lower bounds on the minimum size of advice sufficient to perform topology recognition in time equal to the diameter $D$ of the network.
Together with the upper bound proved in Theorem~\ref{ubTimeD+k}, applied to time $D+1$, these bounds show an exponential gap in the minimum size of advice due to time difference of only one round.

\algo{TR-3}
\noindent{\bf Advice:}\\
The oracle assigns a unique label $\ell(u)$ from the set $\{0,\ldots, n-1\}$ to each node $u$.
The advice given to each node $u$ consists of the diameter $D$, the label $\ell(u)$, and the collection of all  edges incident to $u$, coded as quadruples $\left<\ell(u),p,q,\ell(v)\right>$, where $p$ is the port number at node $u$ corresponding to edge $\{u,v\}$, and $q$ is the port number at node $v$ corresponding to this edge.

\noindent{\bf Node protocol:}\\
In round $i$, each node sends to all its neighbors the collection of edges learned in all previous rounds. 
After $D$ rounds of communication each node reconstructs the topology and stops.
\procend

\begin{proposition}\label{ubTimeD}
Algorithm {\tt TR-3} completes topology recognition for all graphs of size $n$ and diameter $D$ in time $D$, using advice of size $O(n \log n)$.
\end{proposition}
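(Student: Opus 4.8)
The plan is to verify two things: that each node receives advice of size $O(n\log n)$, and that after $D$ rounds each node has enough information to reconstruct the labeled (hence also anonymous) topology. The advice bound is the easy part: each node $u$ receives the diameter $D$ (which takes $O(\log n)$ bits since $D\le \alpha n < n$), its own label $\ell(u)$ (which takes $\lceil \log n\rceil$ bits), and one quadruple $\langle \ell(u),p,q,\ell(v)\rangle$ per incident edge. A node of degree $d$ has $d$ such quadruples, each of size $O(\log n)$ since labels are in $\{0,\ldots,n-1\}$ and port numbers are in $\{0,\ldots,n-2\}$, so the contribution is $O(d\log n)$. Since $d\le n-1$, every node's advice has size $O(n\log n)$, which establishes the advice-size claim.

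For correctness, I would argue that after $i$ rounds of communication every node $u$ knows the set of all edge-quadruples incident to nodes in $N_i(u)$ (the ball of radius $i$ around $u$). This follows by induction on $i$: in round $0$ (before any communication) node $u$ knows exactly its own incident edges, i.e.\ the quadruples of $N_0(u)=\{u\}$; in round $i$, node $u$ receives from each neighbor $w$ the collection of quadruples that $w$ knew after round $i-1$, namely those incident to $N_{i-1}(w)$, and the union over all neighbors $w$ of $N_{i-1}(w)$, together with $u$'s own incident edges, is exactly the set of edges incident to $N_i(u)$. Because every quadruple $\langle \ell(u'),p,q,\ell(v')\rangle$ carries both endpoint labels and both port numbers, the multiset of quadruples collected is precisely a labeled-port description of the subgraph induced by $N_i(u)$ together with its boundary edges.

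After $D$ rounds, $u$ has learned the quadruples of all nodes in $N_D(u)$. Since the diameter is $D$, we have $N_D(u)=V$, so $u$ holds every edge-quadruple of the whole graph. From this collection $u$ can read off the set of all node labels (each label appearing as an endpoint of some quadruple, together with $u$'s own label), and for each label it knows every incident edge with both port numbers; hence it reconstructs an isomorphic copy of $G$ with all ports correctly marked and with a consistent labeling of nodes shared by all nodes — i.e.\ labeled topology recognition. The node knows $D$ from its advice, so it knows to halt exactly at round $D$; thus the algorithm terminates in time $D$.

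I do not anticipate a genuine obstacle here: the argument is a standard "flooding" / ball-growth induction, and the only point requiring a line of care is to confirm that $N_D(u)=V$ uses the definition of diameter, and that the quadruple encoding suffices to pin down the isomorphism type together with port numbers and the common labeling. Everything else is bookkeeping on sizes.
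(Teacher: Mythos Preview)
Your proposal is correct and follows essentially the same approach as the paper's proof: both argue that flooding for $D$ rounds lets every node collect all edge-quadruples (since every node is within distance $D$), and both bound the advice size by noting that a node has $O(n)$ incident edges each encodable in $O(\log n)$ bits. Your version simply spells out the ball-growth induction and the encoding sizes more explicitly than the paper, which keeps it to two sentences.
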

\begin{proof}
Since each node is within distance $D$ from all others, each node learns the whole collection of edges of the graph within $D$ rounds.
Since these edges are coded using unique labels of adjacent nodes and contain port numbers, this is enough to reconstruct the (labeled) topology.

Since, in a graph of size $n$, the diameter  can be coded with $O(\log n)$ bits and there are $O(n)$ edges incident to a node, each edge being coded with $O(\log n)$ bits, the size of advice is $O(n \log n)$.
\end{proof}


The following lemma will be used for our lower bound.

\begin{lemma}\label{cliques}
There are at least {$((n-1)!)^n/(n!)$ non-isomorphic cliques of size $n$.}
\end{lemma}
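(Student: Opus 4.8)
The plan is to count labeled "port-colored" cliques and then divide by the maximum number of isomorphic copies a single such graph can have. In a clique on $n$ vertices, every vertex has degree $n-1$, so at each vertex the $n-1$ incident edges receive distinct port numbers from $\{0,\dots,n-2\}$; this is exactly a bijection between the neighbours of $v$ and the port set at $v$. Fixing the vertex set to be $\{0,\dots,n-1\}$, a port-colored clique is therefore determined by choosing, independently at each vertex $v$, such a bijection — equivalently, one of $(n-1)!$ permutations — subject to no further constraint, since the port number $q$ at the far endpoint $v$ of edge $\{u,v\}$ is read off from $v$'s own permutation and the two endpoints impose independent data. Hence there are exactly $((n-1)!)^n$ labeled port-colored cliques on the vertex set $\{0,\dots,n-1\}$.

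Next I would pass from labeled to unlabeled (isomorphism) classes. Each isomorphism class contains at most $n!$ labeled representatives, because an isomorphism between two labeled port-colored cliques on $\{0,\dots,n-1\}$ is in particular a permutation of the vertex set, and there are only $n!$ of those; so the orbit of any labeled clique under the relabeling action of the symmetric group has size at most $n!$. Therefore the number of isomorphism classes is at least $((n-1)!)^n / n!$, which is the claimed bound.

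The main thing to be careful about — the only genuine obstacle — is verifying that every one of the $((n-1)!)^n$ choices really yields a valid port-labeled graph in the sense of the paper's definition, i.e. that the two port numbers $p$ at $u$ and $q$ at $v$ assigned to an edge $\{u,v\}$ are well-defined and mutually consistent. This is immediate once one observes that $p$ is determined solely by $u$'s chosen bijection (the port under which $v$ sits) and $q$ solely by $v$'s chosen bijection, with no compatibility requirement linking them; so the product structure is genuine and the count is exact rather than merely a lower bound. The remaining inequality $\,|\text{orbit}| \le n!\,$ is a one-line consequence of the orbit–stabilizer principle (or simply that an isomorphism is a vertex bijection), so no further estimates are needed.
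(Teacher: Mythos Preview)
Your proof is correct and follows essentially the same approach as the paper: count the $((n-1)!)^n$ labeled port assignments on a vertex set $\{0,\dots,n-1\}$ by choosing independently a bijection at each vertex, then observe that each unlabeled (isomorphism) class yields at most $n!$ labeled cliques. Your write-up is somewhat more detailed in justifying the independence of the port choices and the orbit bound, but the argument is the same.
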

\begin{proof}
{Assume first that the nodes are labeled $0, \ldots, n-1$. For each node $i\in [0, n-1]$ there are $(n-1)!$ ways to assign ports at it. These assignments are independent for each node, hence there are $((n-1)!)^n$ distinct assignments of port numbers.
Since from one clique without node labels it is possible to obtain at most $n!$ distinct cliques with nodes labeled $0,\ldots,n-1$, our lower bound follows.}
%
\end{proof}

We define the following classes $\cL(n,D)$ of graphs of size $n$ and diameter $D\le \alpha n$, called {\it lollipops}.
These graphs will be used to prove our lower bounds for time $D$ and below.
Nodes in a  lollipop $L \in \cL(n,D)$ are partitioned into two sets, called the {\em candy} and the {\em stick}.
The candy consists of $n-D$ nodes; for the purpose of describing our construction we will call these nodes $w_1,\ldots, w_{n-D}$. The stick consists of the remaining $D$ nodes {(see fig.~\ref{fig.lollipop} for an example of a lollipop in $\cL(6,3)$)}. 

\begin{figure}
\centering
\includegraphics[scale=0.45]{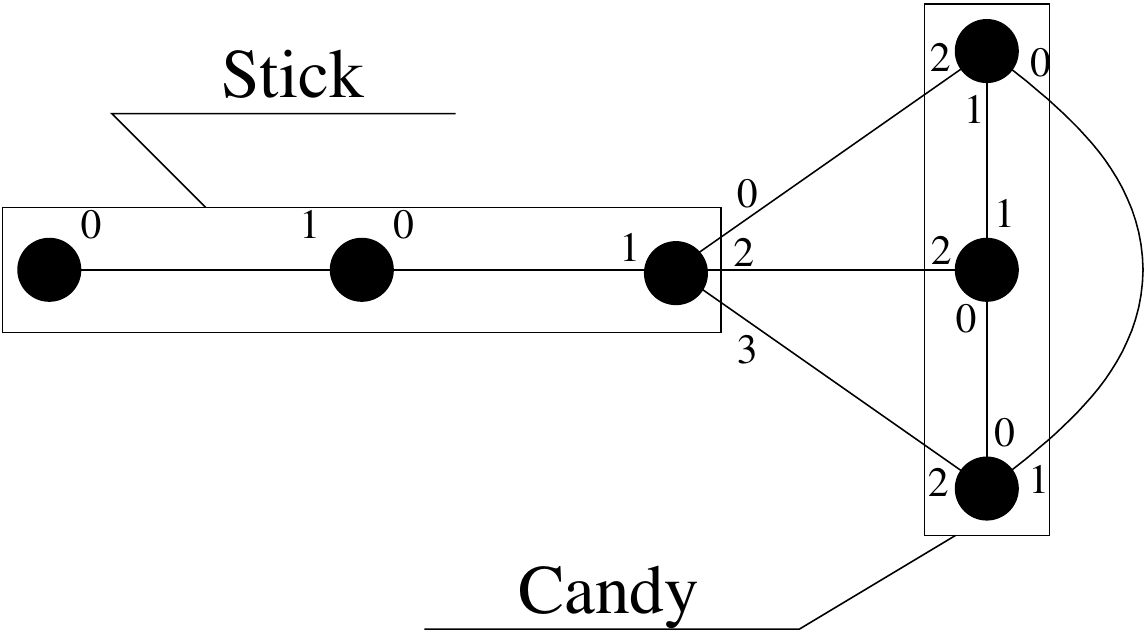}
\caption{\label{fig.lollipop} A lollipop in $\cL(6,3)$.}
\end{figure}

Nodes in the candy are connected to form a clique; port numbers for these edges are assigned arbitrarily from the set $\{0, \ldots, n-D-2\}$. 
Edges of the stick form a path of length $D-1$ with port numbers 0 and 1 at each edge. 

The stick and the candy are connected as follows.
Let $v$ be the endpoint of the stick to which port number 1 has been assigned and let $u$ be the other endpoint of the stick (to which port number 0 has been assigned). Notice that $u$ and $v$ coincide, when $D=1$.
Node $v$ is connected to all nodes in the candy. The port number, at node $v$, corresponding to edge  $\{v, w_i\}$ is $0$, if $i=1$. For $i>1$ this port number is $i$, when $u\ne v$ and $i-1$, when $u=v$. The port number, at all nodes $w_i$, corresponding to edge $\{v, w_i\}$ is $n-D-1$.

Since, for $D\le  \alpha n$, the size of the candy of a lollipop in $\cL(n,D)$ is at least $\lceil n(1-\alpha)\rceil$, Lemma~\ref{cliques} implies the following corollary:

\begin{corollary}\label{corLollipops}
The size of the class $\cL(n, D)$, for $D\le  \alpha n$ is at least {$((\lceil n(1-\alpha) \rceil -1)!)^{\lceil n(1-\alpha) \rceil} /(\lceil n(1-\alpha) \rceil !)\,$.}
\end{corollary}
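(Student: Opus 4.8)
The plan is to deduce the bound directly from Lemma~\ref{cliques}, by showing that non-isomorphic port-labeled candies yield non-isomorphic lollipops. The first step is the elementary observation that in every lollipop $L\in\cL(n,D)$ the candy induces a clique on exactly $n-D$ nodes, and since $D\le\alpha n$ and $n-D$ is an integer we have $n-D\ge\lceil n(1-\alpha)\rceil$.

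The heart of the argument is to prove that if two lollipops $L',L''\in\cL(n,D)$ have candies that are non-isomorphic as port-labeled cliques, then $L'$ and $L''$ are non-isomorphic. The key point is that the stick is attached to the candy in a completely rigid, symmetric way: the attachment node $v$ is joined to \emph{every} candy node, the port at each candy node leading to $v$ is the fixed value $n-D-1$, and all intra-candy ports lie in $\{0,\dots,n-D-2\}$. Assuming $D\ge 2$ and $n$ large enough, the degree sequence already separates the parts: candy nodes have degree $n-D$, interior stick nodes have degree $2$, the far endpoint of the stick has degree $1$, and $v$ has degree $n-D+1$, all of these being pairwise distinct. Hence any isomorphism $\phi\colon L'\to L''$ must map the candy of $L'$ onto the candy of $L''$, and since $\phi$ preserves all edges and all port numbers, its restriction to the candy nodes is a port-preserving isomorphism of the two candy cliques -- a contradiction. (The only edge from a candy node that is not intra-candy is the one to $v$, carried on the fixed port $n-D-1$, so no intra-candy edge or port is lost in the restriction.) Consequently the number of pairwise non-isomorphic lollipops in $\cL(n,D)$ is at least the number of pairwise non-isomorphic cliques on $n-D$ nodes, which by Lemma~\ref{cliques} is at least $((n-D-1)!)^{\,n-D}/((n-D)!)$.

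Finally, writing $g(m)=((m-1)!)^m/(m!)$, a one-line computation shows $g(m+1)/g(m)=m^{m+1}(m-1)!/(m+1)\ge 1$ for $m\ge 2$, so $g$ is non-decreasing on $\{m\ge 2\}$. Since $n-D\ge\lceil n(1-\alpha)\rceil$, substituting $m=\lceil n(1-\alpha)\rceil$ yields the claimed bound $((\lceil n(1-\alpha)\rceil-1)!)^{\lceil n(1-\alpha)\rceil}/(\lceil n(1-\alpha)\rceil!)$; the remaining small cases ($\lceil n(1-\alpha)\rceil\le 1$, or $D=1$) are trivial because $|\cL(n,D)|\ge 1$ while the right-hand side is then at most $1$. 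I expect the second step -- verifying that an isomorphism of lollipops restricts to a port-respecting isomorphism of candies, i.e. that both the candy as a vertex set and the way it meets the stick are isomorphism-invariant -- to be the only real obstacle; the rest is bookkeeping.
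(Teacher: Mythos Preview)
Your proposal is correct and follows exactly the route the paper takes: the paper states the corollary as an immediate consequence of Lemma~\ref{cliques} together with the observation that the candy has $n-D\ge\lceil n(1-\alpha)\rceil$ nodes, and you simply spell out the two steps the paper leaves implicit (that non-isomorphic candies yield non-isomorphic lollipops, and that $m\mapsto((m-1)!)^m/m!$ is non-decreasing).

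One small slip: your claim that the right-hand side is at most $1$ when $D=1$ is false (already for $\lceil n(1-\alpha)\rceil=3$ it equals $4/3$), so that case cannot be dismissed as you do. It is still easy, though: for $D=1$ every isomorphism must send $v'$ to $v''$, because if $\phi(v')$ were a candy node $w''$, then for any $i<n-1$ the edge out of port $i-1$ at $\phi(v')=w''$ would have to carry port $n-2$ at its other (candy) endpoint, contradicting that port $n-2$ at a candy node always points to $v''$; hence the restriction to candies is again a port-preserving isomorphism.
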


\begin{theorem}\label{lbD}
Let $D\le \alpha n$.
The size of advice needed to perform topology recognition in time $D$ in the class of graphs of size $n$ and diameter $D$
is in $\Omega(n \log n)$.
\end{theorem}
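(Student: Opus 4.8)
The plan is to use the class $\cL(n,D)$ of lollipops, whose cardinality is bounded below in Corollary~\ref{corLollipops} by a quantity of order $2^{\Omega(n^2\log n)}$, together with the observation that in time $D$ there is a node that cannot ``see'' how the candy is wired. Recall that, as for brooms, all lollipops in $\cL(n,D)$ are defined over the same vertex set and have the same set of edges; they differ only in the port numbering inside the candy clique, and distinct lollipops of the class are pairwise non-isomorphic, since attaching the rigid stick (whose far endpoint $u$ is the unique vertex of degree $1$ when $D\ge 2$, and which fixes every vertex of a lollipop under any isomorphism) to a clique cannot create new isomorphisms. In particular, an assignment of advice to a lollipop of $\cL(n,D)$ is simply a function $A$ from this common vertex set to $\{0,1\}^{*}$.

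First I would isolate the ``blind'' node. Let $u$ be the far endpoint of the stick (the one not attached to the candy; $u=v$ when $D=1$). Every candy node is at distance exactly $D$ from $u$: the only way from $u$ towards the candy is to traverse the whole stick to $v$ and then take one more step, and any walk from $u$ that uses an edge internal to the candy has length at least $D+1$. Consequently no candy-internal edge appears in the truncated view $\cV^{D}(u)$: the candy nodes appear only as leaves at depth $D$, reached through the (fixed) stick and the (fixed) port numbers of the edges between $v$ and the candy. Hence $\cV^{D}(u)$, as a port-labeled rooted tree together with its correspondence to the vertices of the graph, is literally the same for every lollipop in $\cL(n,D)$; and therefore, for any fixed advice function $A$, the decorated view at depth $D$ of $u$ is the same in every lollipop of the class decorated according to $A$.

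Then I would run the counting argument. Suppose some oracle and algorithm perform topology recognition in time $D$ in the class of graphs of size $n$ and diameter $D$, using advice of size $b$, and let $A_L$ denote the advice function produced by the oracle on the lollipop $L$. If two distinct (hence non-isomorphic) lollipops $L',L''$ satisfied $A_{L'}=A_{L''}$, then by the previous paragraph node $u$ has identical decorated views at depth $D$ in the two decorated lollipops, so by Proposition~\ref{propSameHistory} its message history through round $D$ is identical, and thus it outputs the same map in both executions --- contradicting correctness, since $L'\not\cong L''$. Hence $L\mapsto A_L$ is injective, so $|\cL(n,D)|$ is at most the number of advice functions, which is less than $2^{(b+1)n}$. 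Combining this with Corollary~\ref{corLollipops} and taking logarithms (using $\log(m!)=\Theta(m\log m)$ and writing $N=\lceil n(1-\alpha)\rceil=\Theta(n)$) gives
$$(b+1)n \;>\; \log|\cL(n,D)| \;\ge\; N\log\big((N-1)!\big)-\log(N!)\;=\;\Omega(n^{2}\log n),$$
whence $b\in\Omega(n\log n)$, as claimed.

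The routine parts are the logarithm estimate and the (already noted) fact that distinct lollipops are non-isomorphic. The one step that needs care --- and the conceptual core of the argument --- is the claim that the wiring of the candy is genuinely invisible from $u$ within $D$ rounds: one must verify the distances precisely, confirm that every walk of length at most $D$ from $u$ that leaves the stick ends at a candy node and stops there, and note that the edges between $v$ and the candy, which do occur in $\cV^{D}(u)$, carry port numbers that are fixed throughout the whole class. Once this is established, the pigeonhole step and the bound on $b$ are immediate.
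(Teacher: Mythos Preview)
Your proof is correct and follows essentially the same approach as the paper: use the lollipop class $\cL(n,D)$, observe that the (undecorated) view at depth $D$ of the far endpoint $u$ of the stick is identical across all lollipops because every walk of length $\le D$ from $u$ ends before it can traverse any candy-internal edge, and then do a counting argument comparing $|\cL(n,D)|$ (from Corollary~\ref{corLollipops}) to the number of possible advice assignments. The only cosmetic difference is that you phrase the pigeonhole step as ``$L\mapsto A_L$ must be injective'' whereas the paper counts decorated views directly; since the undecorated view is fixed, these are the same count.
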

\begin{proof}
If we consider a lollipop of diameter $D\le \alpha n$, then there are $\Omega(n^2)$ edges of the candy that are outside of the view at depth $D$ of the endpoint $u$ of the stick. 
The idea of the proof is based on the fact that  information about these edges has to be assigned to nodes of the graph as advice that will become available to $u$ within time $D$.

First observe that the view at depth $D$ of the endpoint $u$ of the stick is the same for all lollipops in $\cL(n,D)$.
Hence, if the size of advice is at most $c n \log n$, then the number of distinct decorated views of node $u$ is at most $2^{cn^2 \log n +n}<2^{cn^2 (\log n +1)}$.
We will show that, if {$c<(1-\alpha)^2/5$,}
then, for sufficiently large $n$, the number of lollipops in $\cL(n,D)$ exceeds this bound.
Indeed,  by Corollary~\ref{corLollipops}, the size of the class $\cL(n,D)$ is at least
{$$\frac{(\lceil n(1-\alpha) \rceil -1)!^{\lceil n(1-\alpha) \rceil}}{\lceil n(1-\alpha) \rceil !}$$}
For sufficiently large $n$ we have
\begin{small} 
{
$$ \frac{(\lceil n(1-\alpha) \rceil -1)!^{\lceil n(1-\alpha) \rceil}}{\lceil n(1-\alpha) \rceil !} > 
\frac{n(1-\alpha)}{2}^{n^2(1-\alpha)^2/4}\,.$$
}
\end{small}
It is enough to show that 
{$$ \left(\frac{n(1-\alpha)}{2}\right)^{n^2(1-\alpha)^2/4} > 2^{ n^2 (\log n +1)(1-\alpha)^2/5},$$}
which is immediate to verify by taking the logarithm of both sides.

It follows that the same decorated view at depth $D$ of node $u$ corresponds to different lollipops from $\cL(n,D)$. In view of Proposition~\ref{propSameHistory}, this proves that (even anonymous) topology recognition in time $D$, for the class of graphs of size $n$ and diameter $D\le \alpha n$, requires advice of size at least 
{$(n \log n) (1-\alpha)^2/5 \in \Omega(n \log n)$.}
\end{proof}

Proposition~\ref{ubTimeD} and Theorem \ref{lbD} imply the following corollary.
\begin{corollary}\label{corD}
Let $D\le \alpha n$.
The minimum size of advice sufficient to perform topology recognition in time $D$ in the class of graphs of size $n$ and diameter $D$
is in $\Theta(n \log n)$.
\end{corollary}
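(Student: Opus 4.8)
The plan is to obtain the corollary directly by combining the upper bound of Proposition~\ref{ubTimeD} with the matching lower bound of Theorem~\ref{lbD}. On the upper side, Proposition~\ref{ubTimeD} already shows that Algorithm~{\tt TR-3} performs topology recognition in exactly $D$ rounds with advice of size $O(n\log n)$: the oracle gives each node $u$ its unique label $\ell(u)$, the value $D$, and the list of incident edges coded as quadruples $\langle \ell(u),p,q,\ell(v)\rangle$; after $D$ rounds of flooding the edge set, every node has seen every node (diameter $D$) and can assemble the full labeled graph. So the minimum size of advice for time $D$ is $O(n\log n)$.

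For the lower side I would invoke Theorem~\ref{lbD}, which gives $\Omega(n\log n)$ via the lollipop family $\cL(n,D)$: the stick endpoint $u$ has the same truncated view $\cV^{D}(u)$ in every lollipop, yet by Corollary~\ref{corLollipops} there are super-exponentially many non-isomorphic lollipops (all the choices of port labelings on the candy clique), so the $\Omega(n^2)$ candy edges lying outside $\cV^{D}(u)$ must be conveyed through the advice; a counting argument against $c\,n\log n$ bits of advice then forces two non-isomorphic lollipops to present $u$ with identical decorated views at depth $D$, contradicting correctness via Proposition~\ref{propSameHistory}.

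Since both hypotheses are met under the standing assumption $D\le\alpha n$ (the lower bound construction needs it, and the upper bound needs nothing beyond the definition of $\cC(n,D)$), the minimum size of advice sufficient for topology recognition in time $D$ is simultaneously $O(n\log n)$ and $\Omega(n\log n)$, i.e.\ $\Theta(n\log n)$. There is no genuine obstacle at this point: all the content resides in Proposition~\ref{ubTimeD} and Theorem~\ref{lbD}, and the corollary is merely their conjunction; the only thing to check is that the ranges of $D$ in the two statements are compatible, which they are.
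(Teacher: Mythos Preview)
Your proposal is correct and matches the paper's approach exactly: the corollary is stated as an immediate consequence of Proposition~\ref{ubTimeD} (upper bound) and Theorem~\ref{lbD} (lower bound), with no additional argument. Your summary of why those two results apply and are compatible under the assumption $D\le\alpha n$ is accurate.
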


\section{Time below $D$}
In this section we study the minimum size of advice sufficient to perform topology recognition when the time allotted for this task is too short, for some node, to communicate with all other nodes in the network.


\algo{TR-4}
\noindent{\bf Advice:}\\
Let $G$ be a graph of size $n$ and diameter $D$.
The oracle assigns a unique label $\ell(v)$ from $\{0,\ldots, n-1\}$ to each node $v$ in the graph $G$.
It codes all edges of the graph as quadruples $\left<\ell(u),p,q,\ell(v)\right>$, where $\ell(u)$ and $\ell(v)$ are the labels of two adjacent nodes $u$ and $v$, $p$ is the port number at node $u$ corresponding to edge $\{u,v\}$, and $q$ is the port number at node $v$ corresponding to this edge.
Let $E$ be the set of all these codes.

Let $t = \lfloor (D-k)/3\rfloor ${, for $D\ge k >0$, be the time available to complete the topology recognition}.
If $t=0$, then the advice provided by  the oracle to each node $u$ is: $\ell(u)$, the collection $E$ of all edges, and the integer $0$.

If $t\ge 1$, then the oracle picks a set $X$ of nodes in  $G$ satisfying Lemma~\ref{lemDominating}, for $r=2t$.
{For each node $x\in X$, l}et $z(x) = |N_{t}(x)|$. Moreover, let $E_1, \ldots, E_{z(x)}$ be a partition of the edges in $E$ into $z(x)$ pairwise disjoint sets of sizes differing by at most 1. 
Let $v_1, \ldots,v_{z(x)}$ be an enumeration of nodes in $N_t(x)$.
The advice given by the oracle to node $v_i\in N_t(x)$ consists of the label $\ell(v_i)$, of the set $E_i$, and of the integer $t$.
Every other node $u$ only gets $\ell(u)$ and $t$ as advice.
Let $A$ be the resulting assignment of advice.

\noindent{\bf Node protocol:}\\
Let $t$ be the integer received by all nodes as part of their advice.

In round $i$, with $1\le i \le 3t$, each node sends to all its neighbors the collection of edges learned in all previous rounds. 
(In particular, if $t=0$, then there is no communication.)
After $3t$ rounds of communication each node reconstructs the topology and stops.
\procend

\begin{theorem}\label{ubTimeD-k}
Let $0< k \le D$.
Algorithm {\tt TR-4} completes topology recognition for all graphs of size $n$ and diameter $D$ within time $D-k$, using advice of size $O((n^2\log n)/(D-k+1))$.
\end{theorem}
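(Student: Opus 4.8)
The plan is to verify that Algorithm {\tt TR-4} is correct and then bound the size of the advice it uses. The correctness argument splits into the two cases already present in the algorithm. If $t=\lfloor (D-k)/3\rfloor =0$, then $k$ is within a constant of $D$ and every node receives its own label together with the full edge set $E$; with no communication at all each node can immediately reconstruct the labeled graph, and since $D-k\ge 0=3t$ this is within the allotted time. If $t\ge 1$, fix the set $X$ guaranteed by Lemma~\ref{lemDominating} for $r=2t$. For each $x\in X$ the edge set $E$ is partitioned into $z(x)=|N_t(x)|$ nearly-equal pieces and piece $E_i$ is handed to the $i$-th node of $N_t(x)$. Since the balls $N_t(x)$ for distinct $x\in X$ are pairwise disjoint (distances within $X$ exceed $2t$), no node is ever assigned two different pieces, so the assignment $A$ is well defined.

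The key observation for correctness when $t\ge 1$ is that every node $u$ of $G$ lies within distance $2t$ of some $x\in X$, hence within distance $2t$ of \emph{every} node of $N_t(x)$ (triangle inequality: $u$ is within $t$ of $x$ via Lemma~\ref{lemDominating}, wait — actually $u$ is within $2t$ of $x$, so $u$ is within $3t$ of each node of $N_t(x)$). So after $3t$ rounds of flooding the collected edge sets, node $u$ has received every piece $E_i$ held by the nodes of $N_t(x)$, and these pieces together form all of $E$. Since $E$ codes every edge of $G$ using the unique labels $\ell(\cdot)$ and the port numbers at both endpoints, $u$ can reconstruct the entire labeled topology, correctly marked with ports. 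Finally $3t = 3\lfloor (D-k)/3\rfloor \le D-k$, so the algorithm halts in time at most $D-k$. I should double-check the distance bound: $u$ within $2t$ of $x$ and $v_i\in N_t(x)$ means $\mathrm{dist}(u,v_i)\le 3t$, so $3t$ rounds of communication indeed suffice for $u$ to hear from $v_i$; this is the place where the choice $r=2t$ and the factor $3$ in $t=\lfloor(D-k)/3\rfloor$ are both used, and it is worth stating explicitly.

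For the advice size, in the case $t=0$ each node gets $\ell(u)$ of size $O(\log n)$, the integer $0$, and the full set $E$ of $O(n^2)$ edge-codes each of size $O(\log n)$, i.e. $O(n^2\log n)$ bits; since $D-k+1$ is then bounded by a constant, this is $O((n^2\log n)/(D-k+1))$. In the case $t\ge 1$ the dominant term is the piece $E_i$ assigned to $v_i\in N_t(x)$. Here I would invoke the standard fact that a ball of radius $t$ contains at least $t+1$ nodes (a shortest path of length $t$ from $x$ exists unless the graph has diameter $<t$, which cannot happen since the diameter is $D\ge k+3t>3t\ge t$), so $z(x)=|N_t(x)|\ge t+1$. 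Hence each piece has at most $\lceil |E|/(t+1)\rceil$ edge-codes, and the advice at any node is $O(\log n)+O((n^2\log n)/(t+1))+O(\log t)=O((n^2\log n)/(t+1))$. Finally $t+1 = \lfloor (D-k)/3\rfloor + 1 = \Theta(D-k+1)$, which gives the claimed bound $O((n^2\log n)/(D-k+1))$ and completes the proof.

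The main obstacle — really the only nontrivial point — is pinning down the reachability estimate: one must be careful that the $3t$ rounds of flooding actually reach, from each node $u$, \emph{all} the nodes $v_i\in N_t(x)$ that collectively hold the full edge set, which needs both $\mathrm{dist}(u,x)\le 2t$ and $\mathrm{dist}(x,v_i)\le t$, and that the lower bound $|N_t(x)|\ge t+1$ holds so that the per-node pieces are genuinely of size $O(n^2\log n / (D-k+1))$ rather than larger. Everything else is routine counting and the direct appeal to Lemma~\ref{lemDominating}.
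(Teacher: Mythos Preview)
Your proof is correct and follows essentially the same route as the paper's: split on $t=0$ versus $t\ge1$, use Lemma~\ref{lemDominating} to get $X$, note that every node is within $3t$ of all of $N_t(x)$ for some $x\in X$, and bound the piece sizes via $|N_t(x)|\in\Omega(D-k+1)$. One small wrinkle: your justification that $|N_t(x)|\ge t+1$ (``a shortest path of length $t$ from $x$ exists unless the graph has diameter $<t$'') is not literally correct, since diameter $\ge t$ does not by itself force the eccentricity of $x$ to be at least $t$; the clean fix is to observe that every node has eccentricity at least $\lceil D/2\rceil > t$, so a shortest path of length $t$ from $x$ does exist.
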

\begin{proof}
If $t=0$ then $D-k$ is constant. In this case each node uses the collection $E$ of edges received as advice to reconstruct the topology in time $0$.
Since there are $O(n^2)$ edges in a graph of size $n$, and each edge can be coded with $O(\log n)$ bits, the size of advice is $O(n^2 \log n)=O((n^2 \log n)/(D-k+1))$.

Hence we can assume that $t\ge 1$.
Since each node is at distance at most $2t$ from some node in $X$, and the decorations of the nodes in $N_t(v)$, for each  such node $v$, assigned according to $A$, collectively contain the entire set $E$, after time $3t\le D-k$ each node of the graph can reconstruct the whole (labeled) topology.
It remains to be shown that the size of advice is in $O((n^2\log n)/(D-k+1))$. Indeed, $z(x) > t \in \Omega(D-k+1)$ and the size of each set $E_i$ is at most $\lceil |E| / z(x) \rceil$. Each edge can be coded with $O(\log n)$ bits and sets $N_t(x)$, for $x\in X$, are disjoint.
The theorem follows from the fact that $|E|\in O(n^2)$.
\end{proof}


The following lower bound shows that the size of advice used by Algorithm~{\tt TR-4} is asymptotically optimal.

\begin{theorem}\label{lbD-k}
Let $ D\le  \alpha n$ and $0 < k \le D$.
The size of advice needed to perform topology recognition in time $D-k$ in the class of graphs of size $n$ and diameter $D$
is in $\Omega((n^2 \log n)/(D-k+1))$.
\end{theorem}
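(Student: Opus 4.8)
The plan is to use the lollipop construction $\cL(n,D')$, but with a suitably enlarged diameter $D'$ so that the distance from the candy to the far end of the stick is exactly $D-k+1$; this forces the relevant information about the candy's edges to be unavailable to the far endpoint within time $D-k$ unless it is supplied as advice. Concretely, I would take $D' = D$ and look at a node $u$ on the stick at distance $D-k$ from... no — more carefully, since we must stay within the class $\cC(n,D)$, I would keep the diameter equal to $D$ but consider the endpoint $u$ of the stick, whose view at depth $D-k$ reaches only the nodes within distance $D-k$ of $u$. For $D-k$ rounds the node $u$ sees a fixed truncated view (the same for all lollipops in $\cL(n,D)$, by the argument used in Theorem~\ref{lbD}), so all the information $u$ ever receives about the $\Omega(n^2)$ edges of the candy lying outside $N_{D-k}(u)$ must come through the advice of the $O(D-k+1)$-ball's worth of nodes that $u$ can hear from — that is, from the at most $|N_{D-k}(u)|$ strings of advice that propagate to $u$ in time $D-k$.

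The key steps, in order, are: (1) Fix the lollipop family $\cL(n,D)$ and the stick endpoint $u$; observe (as in the proof of Theorem~\ref{lbD}) that the undecorated truncated view $\cV^{D-k}(u)$ is identical across all members of $\cL(n,D)$, and that $|N_{D-k}(u)| \le D-k+1 + (\text{nodes of the candy reachable})$; in fact for $k\ge 1$ we have $N_{D-k}(u)$ contained in the stick together with at most the whole candy, but the crucial point is that $u$'s decorated view at depth $D-k$ is a function only of the multiset of advice strings assigned to the at most $n$ nodes that appear in $\cV^{D-k}(u)$ — and these are the same node-slots for every lollipop. (2) By Corollary~\ref{corLollipops}, $|\cL(n,D)| \ge ((\lceil n(1-\alpha)\rceil-1)!)^{\lceil n(1-\alpha)\rceil}/(\lceil n(1-\alpha)\rceil!)$, whose logarithm is $\Omega(n^2\log n)$. (3) Suppose the advice size is at most $c\,(n^2\log n)/(D-k+1)$. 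Then the number of distinct decorated views $\cV^{D-k}(u)$ is at most $2^{s(D-k+1)+n}$ where $s$ is the advice size, because each of the $O(D-k+1)$ "layers" of nodes that $u$ hears from within time $D-k$ contributes a bounded amount — more precisely, the decorated view at depth $D-k$ is determined by the advice of the nodes at distance $\le D-k$ from $u$ that lie on the stick (there are $D-k+1$ of them) together with the advice of the nodes of the candy, but the candy nodes are at distance $\ge k-1$ from $u$, so... . (4) Choose $c$ small enough (depending on $\alpha$) that $\log|\cL(n,D)| > s(D-k+1) + n$ for large $n$; conclude by the pigeonhole principle that two non-isomorphic lollipops induce the same decorated view at depth $D-k$ at $u$, and invoke Proposition~\ref{propSameHistory}.

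The main obstacle — and the step I would spend the most care on — is step (3): correctly bounding the number of decorated views of $u$ at depth $D-k$ in terms of the advice size. The point of multiplying by $(D-k+1)$ rather than by $n$ is that, in time $D-k$, the endpoint $u$ of the stick can only "collect" the advice of nodes on the stick up to distance $D-k$ from it, i.e. of $O(D-k+1)$ nodes; the candy is at distance roughly $D-k+1$ from $u$, so its advice does not reach $u$ in time. One must set up the lollipop parameters so that the candy sits just out of reach: the stick has length $D$, the candy node $w_1$ adjacent to the stick-top $v$ is at distance $D$ from $u$, hence at distance $> D-k$ from $u$ whenever $k\ge 1$, so indeed none of the candy's advice reaches $u$ within $D-k$ rounds, and the decorated view $\cV^{D-k}(u)$ depends only on the $\le D-k+1$ advice strings of the stick nodes within distance $D-k$. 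Thus the count of decorated views is at most $2^{s(D-k+1)}$, which is $< |\cL(n,D)|$ for $s = c(n^2\log n)/(D-k+1)$ with $c$ small. I would double-check the boundary cases ($k=D$, and $D$ close to $\alpha n$) and verify the arithmetic that $\log|\cL(n,D)| \ge (1-\alpha)^2 n^2(\log n)/5$ from Corollary~\ref{corLollipops} carries over, exactly as in Theorem~\ref{lbD}.
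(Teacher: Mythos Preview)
Your proposal is correct and follows essentially the same route as the paper: use the lollipop class $\cL(n,D)$, observe that the undecorated view of the stick endpoint $u$ at depth $D-k$ is identical across all lollipops and sees only the $D-k+1$ stick nodes (the candy being at distance $D>D-k$ from $u$ for $k\ge 1$), bound the number of decorated views by roughly $2^{s(D-k+1)}$, and apply the pigeonhole argument against $|\cL(n,D)|$ via Corollary~\ref{corLollipops} and Proposition~\ref{propSameHistory}. The only cosmetic difference is that the paper states the count of nodes visible to $u$ (namely $D-k+1$) up front rather than arriving at it after the exploratory back-and-forth in your steps~(1) and~(3); once you prune those false starts, the two arguments coincide.
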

\begin{proof}
The argument follows closely the proof of Theorem~\ref{lbD}. The only difference is that we consider decorated views of the endpoint $u$ of the stick at depth $D-k$ instead of $D$. Since there are only $D-k+1$ nodes in the lollipop at distance $D-k$ from $u$, if the  size of advice is at most  $(cn^2\log n)/(D-k+1)$, then the number of distinct decorated views of node $u$ at depth $D-k$ is at most  $2^{((cn^2 \log n)/(D-k +1)  + 1) (D-k +1)}= 2^{cn^2 \log n  + (D-k +1)} \le 2^{cn^2 (\log n  +1)}$, for sufficiently large $n$.

The same computation as in the proof of Theorem~\ref{lbD} shows that, if $c<(1-\alpha)^2/7$, then the same decorated view at depth $D-k$ of node $u$ corresponds to different lollipops from $\cL(n,D)$. In view of Proposition~\ref{propSameHistory}, this proves that (even anonymous) topology recognition in time $D-k$, for the class of graphs of diameter $D$ and size $n$, requires advice of size at least $((1- \alpha)^2/7) \cdot (n^2 \log n) /(D-k+1) \in \Omega((n^2 \log n)/(D-k+1))$.
\end{proof}

Theorems~\ref{ubTimeD-k} and \ref{lbD-k} imply the following corollary.
\begin{corollary}\label{corD-k}
Let $D\le \alpha n$  and $0 < k \le  D$.
The minimum size of advice sufficient to perform topology recognition in time $D-k$ in the class of graphs of size $n$ and diameter $D$
is in $\Theta((n^2\log n)/(D-k+1))$.
\end{corollary}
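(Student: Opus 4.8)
Corollary~\ref{corD-k} asserts that for $D\le\alpha n$ and $0<k\le D$, the minimum size of advice for topology recognition in time $D-k$ in the class of graphs of size $n$ and diameter $D$ is $\Theta((n^2\log n)/(D-k+1))$.

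The plan is to observe that this is an immediate consequence of the two preceding theorems, so the ``proof'' is really a one-line bookkeeping argument and the work has already been done. First I would invoke Theorem~\ref{ubTimeD-k}, which provides Algorithm {\tt TR-4} and shows that advice of size $O((n^2\log n)/(D-k+1))$ suffices for topology recognition in time $D-k$; since this algorithm performs the harder task of labeled topology recognition, it in particular performs the (weaker) task quantified by the corollary. This gives the upper bound, i.e.\ the $O$-direction of the $\Theta$. Second I would invoke Theorem~\ref{lbD-k}, which constructs the lollipop family $\cL(n,D)$ and shows, via a counting argument against Proposition~\ref{propSameHistory}, that any advice scheme enabling topology recognition in time $D-k$ must use advice of size $\Omega((n^2\log n)/(D-k+1))$; since that lower bound holds even for anonymous topology recognition, it applies to whichever variant the corollary refers to. This gives the $\Omega$-direction. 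Combining the matching $O$ and $\Omega$ bounds yields the claimed $\Theta$, which completes the proof.

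There is no genuine obstacle here: both the matching bounds are already established, the hypotheses $D\le\alpha n$ and $0<k\le D$ are exactly those of both theorems, and the only thing to check is that the two theorems speak about the same quantity (minimum advice for topology recognition in time $D-k$ over $\cC(n,D)$), which they do once one recalls the convention set in Section~1.2 that upper bounds are stated for labeled recognition and lower bounds for anonymous recognition, so both apply to the single term ``topology recognition.'' The one expository point worth a sentence is noting that the additive $+1$ in $D-k+1$ is what makes the statement sensible at $k=D$ (time $0$), where the bound reads $\Theta(n^2\log n)$, consistent with the $t=0$ case of {\tt TR-4} and with the remark in the introduction that advice of size $O(n^2\log n)$ permits topology recognition without communication.

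\begin{proof}
By Theorem~\ref{ubTimeD-k}, Algorithm {\tt TR-4} performs (labeled, hence also anonymous) topology recognition for all graphs of size $n$ and diameter $D$ within time $D-k$, using advice of size $O((n^2\log n)/(D-k+1))$; this establishes the upper bound. By Theorem~\ref{lbD-k}, any advice scheme enabling (even anonymous) topology recognition in time $D-k$ for the class of graphs of size $n$ and diameter $D\le\alpha n$ must use advice of size $\Omega((n^2\log n)/(D-k+1))$; this establishes the matching lower bound. Combining the two bounds gives the claimed $\Theta((n^2\log n)/(D-k+1))$.
\end{proof}
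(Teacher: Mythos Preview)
Your proposal is correct and matches the paper's approach exactly: the paper simply states that the corollary follows from Theorems~\ref{ubTimeD-k} and~\ref{lbD-k}, and your proof spells out precisely that combination of the upper and lower bounds. Your added remarks about the labeled/anonymous convention and the role of the $+1$ at $k=D$ are accurate and consistent with the paper's setup.
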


\section{Conclusion and open problems}
We presented upper and lower bounds on the minimum size of advice sufficient to perform topology recognition, in a given time $T$, in $n$-node networks of diameter $D$. 
Our bounds are asymptotically tight for time $T=2D+1$ and, if $D\le \alpha n$ for some constant $\alpha<1$, in the time interval $[0, \ldots , 3D/2]$.
Moreover, in the remaining time interval $(3D/2, \ldots, 2D]$ our bounds are still asymptotically tight if $D\in \Omega(\log n)$.
Closing the remaining gap between the lower bound $1$ and the upper bound $O(1+(\log n)/k)$ in this remaining time interval, for graphs of very small diameter $D \in o(\log n)$, is a natural open problem. In particular, it would be interesting to find the minimum time in which topology recognition can be accomplished using advice of constant size, or even of size exactly 1.

Other open problems remain in the case of networks with very large diameter, those which do not satisfy the assumption $D\le \alpha n$ for some constant $\alpha <1$, or equivalently those for which $n-D \in o(n)$.
Our upper bounds do not change in this case (we did not use the assumption $D\le \alpha n$ in their analysis), while our lower bounds change as follows, using the same constructions.
The lower bound for time above $D$, i.e., when $T=D+k$, where $0<k\le D$, becomes $\Omega((\log(n-D))/k)$;
our lower bound for time $D$ becomes $\Omega(((n-D)^2\log(n-D))/n)$;
the lower bound for time below $D$, i.e., when $T=D-k$, where $0<k\le D$, becomes $\Omega(((n-D)^2\log(n-D))/(D-k+1))$. It remains to close the gaps between these lower bounds and the upper bounds that we gave for each allotted time.

Another open problem is related to the communication model used. As mentioned in the Introduction, we chose the $\cal LOCAL$ model which ignores the size of messages. It would be interesting to 
study how the results change in the $\cal CONGEST$ model in which messages are limited to logarithmic size. Some compression of the information transmitted should be possible, e.g., along the lines of \cite{Ta}.

Let us also address the issue of node identities vs. advice given to nodes.
We did our study for unlabeled networks, arguing that nodes may be reluctant to disclose their identities for security or privacy reasons.
As we have seen, however, for anonymous networks some advice has to be given to nodes, regardless of the allotted time.
Does the oracle have to provide new distinct labels to nodes?
Our results show that for time above $D$ this is not the case, as the minimum size of advice enabling topology recognition in this time is too small for assigning a unique identifier to each node.
Hence, in spite of not having been given, a priori, unique identifiers, nodes can perform labeled topology recognition in this time span.
On the other hand for time at most $D$, the minimum size of advice is sufficiently large to provide distinct identifiers to nodes, and indeed our oracles inserted unique identifiers as part of advice. However, this should not raise concerns about security or privacy, as these identifiers may be arbitrary and hence should be considered as ``nicknames'' temporarily assigned to nodes. 

A different set of problems would appear if (unlike in our scenario) randomization were allowed. For many time values the size of required advice
would then be drastically reduced. For example, for time $D$, advice of size $O(\log n)$ would suffice. Indeed, providing nodes with
 the diameter $D$ and the size $n$ of the network would enable them to first acquire
different labels with high probability (the size $n$, or some bound on it, is needed to know how long the random labels should be, but such labels can be
randomly produced with no communication),
and then learn the topology of the resulting labeled graph in $D$ rounds. We leave the analysis of tradeoffs between time and advice for randomized topology 
recognition as an open problem.  
 
\bibliographystyle{plain}

\end{document}